\title{An Envy-Free Online UAV Charging Scheme with Vehicle-Mounted Mobile Wireless Chargers}
\author{Yuntao~Wang\inst{1}, Zhou~Su\inst{1,*}\corinfo{zhousu@ieee.org}}
\address[1]{School of Cyber Science and Engineering, Xi'an Jiaotong University, Xi'an 710049, China}
\begin{document}

\maketitle

\begin{abstract}
In commercial unmanned aerial vehicle (UAV) applications, one of the main restrictions is UAVs' limited battery endurance when executing persistent tasks.
%In commercial unmanned aerial vehicle (UAV) applications, small and mini UAVs usually suffer limited flight endurance when executing persistent tasks due to stringent battery size.
With the mature of wireless power transfer (WPT) technologies, by leveraging ground vehicles mounted with WPT facilities on their proofs, we propose a mobile and collaborative recharging scheme for UAVs in an on-demand manner. Specifically, we first present a novel air-ground cooperative UAV recharging framework, where ground vehicles cooperatively share their idle wireless chargers to UAVs and a swarm of UAVs in the task area compete to get recharging services. Considering the mobility dynamics and energy competitions, we formulate an energy scheduling problem for UAVs and vehicles under practical constraints. A fair online auction-based solution with low complexity is also devised to allocate and price idle wireless chargers on vehicular proofs in real time. We rigorously prove that the proposed scheme is strategy-proof, envy-free, and produces stable allocation outcomes. The first property enforces that truthful bidding is the dominant strategy for participants, the second ensures that no user is better off by exchanging his allocation with another user when the auction ends, while the third guarantees the matching stability between UAVs and UGVs. Extensive simulations validate that the proposed scheme outperforms benchmarks in terms of energy allocation efficiency and UAV's utility.
\keywords{UAV recharging; WPT; air-ground collaboration; dynamic energy scheduling; envy-freeness}
\end{abstract}

\section{Introduction}
\label{Introduction}
The emerging unmanned aerial vehicles (UAVs) have gained significant success in various applications such as crop surveys, search and rescue, and infrastructure inspection \cite{9849496,9732222,9456851,wang2023survey}.%9762672
Thanks to their low cost, flexible deployment, and controllable maneuverability, UAVs mounted with rich onboard sensors can be fast dispatched to enable autonomous and on-demand mission execution (e.g., sensing and communication recovery) anytime and anywhere \cite{9453820,10106022,9152148}. %9762678
However, commercial UAVs such as quadrotors generally have stringent space and weight limitations, causing inherent constrained battery endurance to support long-duration missions. For example, most mini-UAVs (powered by lithium-ion or lithium polymer batteries) only afford up to 90 minutes of endurance \cite{9743346}. Besides, in executing complex and persistent tasks, relevant compute-intensive operations with video streaming and image processing may consume a considerable amount of UAV's battery energy \cite{10106022,9696188}.
Notably, increasing UAV's battery capacity beyond a certain point can degrade its flight time due to excessive weight \cite{9420719}.
Hence, it is crucial to design effective battery recharging approaches to sustain the life cycle of a UAV flight.

A number of research efforts have been made to address the UAV's battery recharging issue, which can be mainly divided into three types: \emph{energy harvesting} \cite{9060991} from the environment (e.g., solar and wind energy), \emph{battery hotswapping} \cite{6701199} at battery swap stations, and \emph{wireless charging} \cite{9488324} using wireless chargers. In energy harvesting, the energy output of outfitted photovoltaic (PV) arrays or turbine generators can be intermittent and uncertain and highly rely on weather conditions. Besides, the additionally added size and weight on the UAV may raise difficulty in safely landing at dedicated locations.
In battery hotswapping, it generally involves human labors to replace UAV's depleted battery with a fully charged one, affecting autonomous UAV operations in inaccessible or hazardous places \cite{9420719}.
Moreover, it can incur high round-trip energy costs for frequent battery replacement operations.
With the recent breakthrough in wireless power transfer (WPT) techniques, UAVs can be conveniently charged by distributed wireless chargers in a fully automatic manner \cite{9420719,9488324,8846189,9209109}.
As reported, the commercial WPT product of Powermat company can transfer 600\,W of wireless power over a distance of up to 150\,mm to small or medium UAVs with over $90\%$ energy efficiency and high misalignment tolerance \cite{WPTDronePowermat}.
However, deploying and maintaining such static wireless chargers at large-scale task areas (e.g., survivor rescue in disaster sites) can be costly and time-consuming, especially in environmentally harsh terrains.
Besides, the solutions built on static wireless chargers usually lack feasibility and on-demand energy supply capabilities for UAVs, as well as restricting UAVs' operations within specific geographical areas.

\begin{figure}[!t]%\setlength{\abovecaptionskip}{-0.0cm}
\centering
  \includegraphics[width=8.0cm]{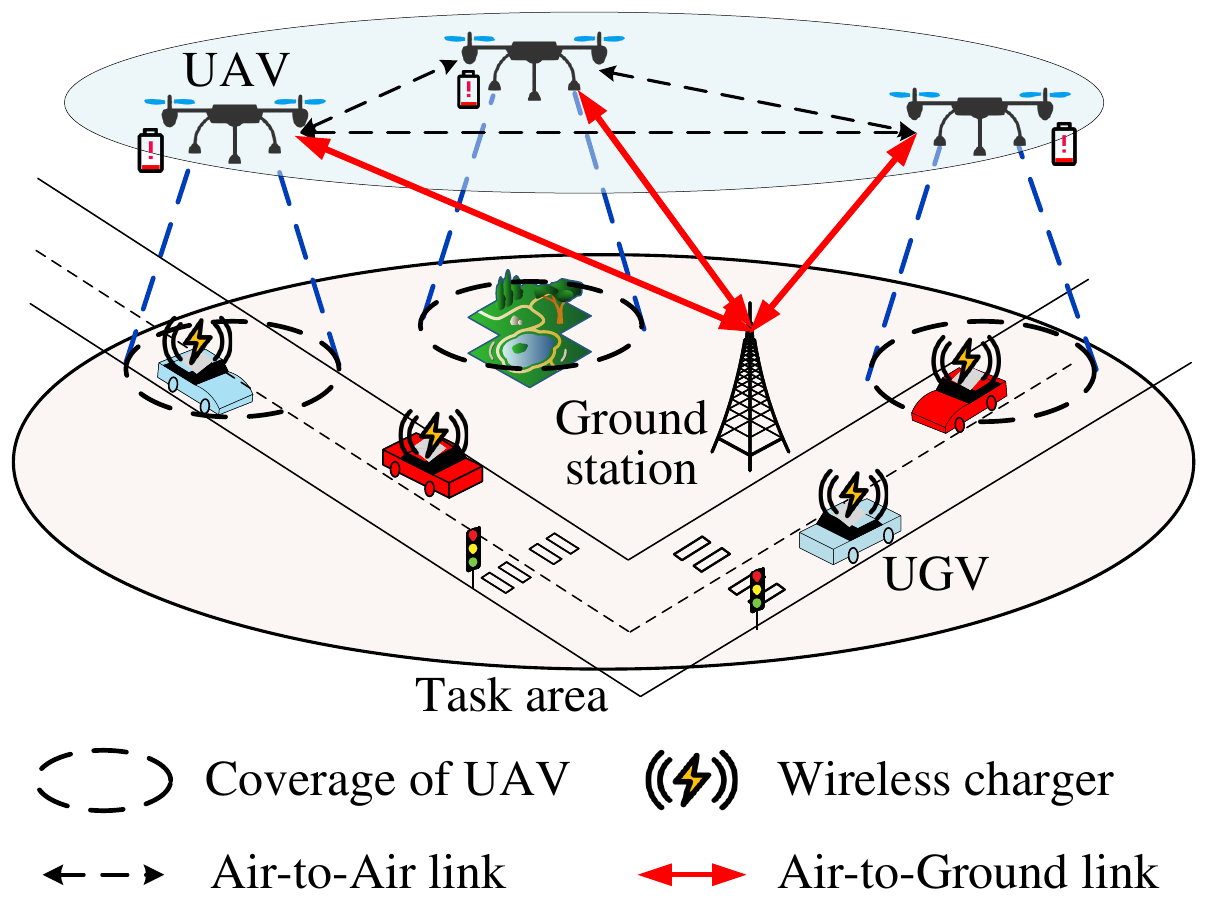}
  \caption{An example scenario of vehicle-mounted mobile wireless chargers for on-demand UAV recharging in the task area.}\label{fig:intro1}%\vspace{-2.5mm}
\end{figure}

In this paper, as shown in Fig.~\ref{fig:intro1}, we focus on an on-demand and cost-effective solution by leveraging unmanned ground vehicles (UGVs) with controllable mobility, where UGVs equipped with wireless charging facilities are deployed in task areas and collaboratively offer sufficient wireless energy supply to prolong the lifetime of the UAV network.
In academia and industry, research works \cite{7535886,9058225,9462603} and companies such as Renault \cite{Renault2014} and DSraider \cite{DSraider2021} have developed such mobile and collaborative platforms for efficient UAV launching, recycling, and recharging using a special compartment in the UGV's roof.

Despite the fundamental contributions on system and protocol design of existing literature \cite{7535886,9058225,9462603,Renault2014,DSraider2021}, the double-side energy scheduling along with user fairness in UG\underline{V}-assisted \underline{w}ireless \underline{r}echargeable \underline{U}AV \underline{n}etworks (VWRUNs) are rarely studied, which motivates our work.
%Nevertheless, there are still a series of fundamental challenges underpinning UGV-assisted wireless rechargeable UAV networks (UWRUNs).%Nevertheless, to fully reap the benefits of UGV-assisted wireless rechargeable UAV networks (UWRUNs), a series of fundamental challenges need to be addressed.
On one hand, compared with fixed chargers, the size of charging (also landing) pads on roofs of UGVs are comparably smaller \cite{8660495}, thereby restricting the number of concurrent charging UAVs. Moreover, in complex missions (e.g., large-scale surveillance), it usually depends on the coordination among multiple UAVs due to the limited capacity (e.g., sensing range) of a single UAV \cite{6701199,9488324}.
Consequently, in highly dynamic VWRUNs, efficient real-time charging scheduling among multiple UAVs and multiple UGVs is of necessity to motivate their energy cooperation.
On the other hand, as UAVs and UGVs are self-interested agents and mutually distrustful, they may behave strategically to maximize their gains and even perform market manipulation by diminishing the legitimate interests of others \cite{8902165}. For example, strategic UGVs may collude to overclaim their energy costs for higher payments from UAVs. Besides, the envy-freeness (i.e., no agent envies the allocation of another agent) \cite{9496271}, as an essential metric to ensure market fairness, is neglected in most of existing works. The violation of envy-freeness may raise low user willingness and acceptance, eventually lowering energy allocation efficiency.
%in the open and untrusted environments.%due to the open and untrusted environment%With the existence of distrustful and strategic entities,
Therefore, it remains an open and vital issue to design a real-time and envy-free charging strategy among UAVs and UGVs while preventing strategic behaviors and motivating their dynamic cooperation in VWRUNs.

%an econometric approach
To address the above issues, we adopt a market-based approach by formulating the double-side charging scheduling problem among multiple UAVs and UGVs in VWRUNs as an online sealed-bid auction, where both UAVs (i.e., buyers) and UGVs (i.e., sellers) are allowed to send their sealed bidding information (including trading time, valuation, supply/demand energy volume, etc.) anytime to the auctioneer (i.e., the ground station).
%In specific, UAVs compete for battery recharging and UGVs compete to offer wireless energy, while both UAVs and UGVs collaboratively execute the persistent tasks within the task area.
The auctioneer collects the bids within a maximum waiting time and publishes the auction outcome when the auction ends. UAVs that fail to match a desired UGV can participate the next-round auction or alternatively fly to a nearby static energy swap/charging station to replenish energy. %for battery replacement/recharging.
Besides, our proposed scheme allows UAVs and UGVs to dynamically join and exit the auction process.
Using rigorous theoretical analysis, we prove that the proposed battery recharging auction is \emph{strategy-proof} (i.e., able to resist strategic agents) and produces \emph{envy-free} allocations (i.e., ensuring market fairness).
The main contributions of this paper are summarized as below.
\begin{itemize}
  \item We propose an on-demand and collaborative UAV recharging framework by employing idle wireless chargers mounted on mobile UGVs to replenish energy for multiple UAVs in executing long-term tasks. An optimization problem for energy scheduling is formulated in VWRUNs based on the UGV type model and UAV's state-of-charge (SoC) model under practical constraints.
  \item We devise an online auction-based approach to solve the real-time energy scheduling problem with low complexity, consisting of the winner determination phase and pricing phase. We theoretically analyze the equilibrium strategy of participants and rigorously prove its strategy-proofness, envy-freeness, and stability. %unique
  \item We carry out extensive simulations to evaluate the feasibility and effectiveness of the proposed scheme. Numerical results demonstrate the superiority of the proposed scheme in terms of energy allocation efficiency, UAV's utility, and social surplus, in comparison with conventional schemes.
\end{itemize}

The remainder of this paper is organized as follows. Section \ref{sec:RELATEDWORK} surveys the related literature, and Section \ref{sec:SYSTEMMODEL} introduces the system model. The detailed design of the proposed scheme is presented in Section \ref{sec:FRAMEWORK}, and its performance is evaluated in Section \ref{sec:SIMULATION} using simulations. Finally, this paper is concluded in Section \ref{sec:CONSLUSION}.

\section{Related Works}\label{sec:RELATEDWORK}
In this section, we review related literature on static/mobile wireless charging solutions and charging scheduling approaches in UAV networks.
\subsection{Static and Mobile Wireless Charging Solutions for UAVs}\label{subsec:relatedwork1}
In modern UAV applications, limited battery capacity poses a significant operational challenge to UAVs' flight durations, especially in executing large-scale persistent missions.
Compared with contact-based conductive charging techniques, the promising WPT technology offers a contact-free and fully automatic wireless charging solution for UAVs while withstanding challenging weather conditions \cite{8846189,9488324,9420719}.
Based on the transmission range, current WPT techniques for UAVs can be categorized into two types \cite{9420719}: \emph{near-field} and \emph{far-field}. The former mainly includes magnetic resonance coupling (MRC) \cite{8846189} and capacitive coupling \cite{8267698}, while the latter usually refers to as non-directive radio frequency (RF) radiation such as laser charging \cite{8995773} and WISP-reader charging \cite{9488324}.

Existing WPT-based UAV charging approaches are mainly built on static wireless charging pads located at building rooftops, power poles, cell towers, etc. For large-scale UAV missions, it highly relies on and bears the costly deployment/maintenance fee of additional wireless charging infrastructures.
In the literature, few works have attempted to design mobile wireless chargers to build a feasible and on-demand solution to sustain large-scale persistent UAV operations. Wu \emph{et al}. \cite{9058225} implemented a collaborative UAV-UGV recharging system, where the UGV equips with an object tracking camera (to automatically maneuver towards the UAV) and a Qi charger on the landing pad (to wirelessly transfer energy to the UAV after landing).
To address the misalignment issues between transmit and receiving coils in WPT-based UAV recharging systems, Rong \emph{et al}. \cite{9349168} designed an optimized coupling mechanism for UAVs with high misalignment tolerance based on the genetic algorithm. A real implementation shows that the design UAV recharging system can transfer a maximum power of 100W with the WPT efficiency of 92.41\%.
Ribeiro \emph{et al}. \cite{9462603} investigated the route planning problem for multiple mobile charging platforms (which can travel to different locations) to support long-duration UAV operations. In \cite{9462603}, the routing problem is formulated as a mixed-integer linear programming (MILP) model and solved using a genetic algorithm together with a construct-and-adjust heuristic method.
{There have been several recent studies leveraging WPT for wireless services and UAV services. Wu \emph{et al}. \cite{9718086} proposed a non-orthogonal multiple access (NOMA)-aided federated learning (FL) framework with WPT, where the base station uses WPT to recharge end devices that perform local training and data transmission in FL. A layered algorithm was also designed in \cite{9718086} to minimize FL convergence latency and overall energy consumption under practical constraints.
Shen \emph{et al}. \cite{9440683} studied a UAV-aided flexible radio resource slicing mechanism in 5G uplink radio access networks (RANs), where the joint 3D placement of UAVs and UAV-device association problem was formulated via an interference-aware graph model. In addition, a lightweight approximation algorithm and an upgraded clique method were devised in \cite{9440683} for reduced complexity.}

{However, existing works mainly focus on the system design and trajectory planning of VWRUNs, whereas the double-side energy scheduling among UAVs and UGVs along with the user fairness in the energy charging market are rarely studied.}

\subsection{WPT Charging Scheduling Methods for UAVs}\label{subsec:relatedwork2}
In the literature, there has been an increasing interest in designing {WPT-based} charging scheduling methods for UAVs.
Zhao \emph{et al}. \cite{8995773} proposed a power optimization method in a static laser charging system for a rotary-wing UAV. A non-convex optimization problem is formulated with coupling variables and practical mobility, transmission, and energy constraints, and two algorithms are devised to search the optimal strategy with guaranteed convergence to stationary solutions.
Li \emph{et al}. \cite{9488324} designed an energy-efficient charging time scheduling algorithm to turn on static wireless chargers (SWCs) in scheduled time periods with the aim to minimize SWCs' energy waste in wirelessly charging UAVs. In their scheme, UAVs' continuous flight trajectories are discretized in both temporal and spatial dimensions, and a pruning-based exhaustive method is devised for near-optimal solution searching.
Yu \emph{et al}. \cite{8460819} studied the problem of route planning for a mini-UAV to visit multiple sensing sites within its battery lifetime, where UGVs acting as mobile recharging stations can offer battery recharging services for the UAV along its tour.
Shin \emph{et al}. \cite{8660495} presented a second-price auction-based charging time scheduling method between multiple UAVs and a ground vehicle, where the ground vehicle serves as a mobile charging station and the auctioneer. In their auction, multiple UAVs bid for the vehicle's charging time slots, and the UAV with the highest valuation is assigned as the winner.

{One can observe that existing works on UAV charging mainly focus on the one-to-one charging pattern \cite{8995773} or many-to-one charging pattern \cite{9488324,8660495}, which is not applicable in our considered scenario with multiple UAVs and multiple UGVs. Besides, due to the high dynamics of VWRUNs and potential strategic entities during charging scheduling, real-time and fair energy scheduling should be enforced, which is rarely studied.} Distinguished from existing works, we design an envy-free online auction mechanism for efficient many-to-many charging scheduling among multiple UAVs and UGVs with consideration of their mobility dynamics, double-side competition, and practical constraints.

\section{System Model}\label{sec:SYSTEMMODEL}
In this section, we first elaborate on the system model including the network model (in Sect.~\ref{subsec:networkmodel}) and UAV energy consumption and wireless charging model (in Sect.~\ref{subsec:WPTmodel}). {The notations used in this paper is summarized in Table~\ref{table0}.}
%In this section, we first elaborate on the system model including the network model and WPT model. Then, we formulate the optimization problem for energy scheduling in VWRUNs.

\begin{table}[!t]%\scriptsize
{
\caption{{Summary of Notations}}\label{table0}\centering
%\vspace{-2mm}
\resizebox{1.02\linewidth}{!}{
\begin{tabular}{|c|l|}
\hline %\hline
\textbf{Notation} & \textbf{Description} \\   \hline
$\mathcal{I}$&Set of UAVs to be recharged. \\
$\mathcal{J}$&Set of UGVs with WPT facilities. \\
$\Lambda$&The GS that coordinates UAVs and UGVs. \\
${\mathcal{I}}'$&Set of low-battery UAVs with $s_i[t] \leq s_{\min}$. \\
${\mathcal{J}}'$&Set of UGVs with idle WPT facilities in time window $\tau$. \\
${\mathcal{W}}$&Winner set of UAVs in the auction. \\
$\mathbb{A}$&UAV recharging auction. \\
$T$&Finite time horizon containing $N$ time slots. \\
$\Delta_t$&Duration of each time slot. \\
$\mathbf{l}_i[t]$&Instant 3D location of UAV $i$ at $t$-th time slot. \\
$R_i$&Radius of sensing spot of UAV $i$. \\
$\theta_i$&Maximum detection angle of UAV $i$'s sensor. \\
$z_{\max}$&Maximum flight altitude of UAV $i$. \\
$v_i$&Flying velocity of UAV. \\
${v_{\max}}$&Maximum velocity of UAV. \\
$s_i[t]$&SoC of UAV $i$'s battery at $t$-th time slot. \\
$C_i$&Battery capacity of UAV $i$. \\
$s_{\min}$&Minimum reserved battery energy. \\
$s_i^{\mathrm{sat}}$&Satisfactory SoC level of UAV $i$. \\
$\rho_i[t]$&Charging urgency of UAV $i$. \\
$C_j$&Total wireless energy supply of UGV $j$. \\
$s_j[t]$&Remaining wireless energy supply of UGV $j$. \\
$q_j[t]$&QoRS of UGV $j$ at $t$-th time slot. \\
$P_i^{\mathrm{fly}}$&Flying power of UAV $i$. \\
$P_i^{\mathrm{hov}}$&Hovering power of UAV $i$. \\
$P_i^{\mathrm{d}}$&Power of UAV $i$ in the descending process. \\
$P_i^{\mathrm{a}}$&Power of UAV $i$ in the ascending process. \\
$P_j^{\mathrm{e}}$&Wireless power transferred by the UGV $j$. \\
$\alpha _i^u$&State variable of UAV $i$. \\
$\eta_j$&Wireless power efficiency of UGV $j$. \\
${\boldsymbol{{b}}}$&Bid profile of all UAVs. \\
$\Phi_i[t]$&Valuation or reserve price of UAV $i$. \\
$\mathcal{U}_i$&Utility function of UAV $i$. \\
$\beta_{i,j}$&Binary allocation outcome. \\
$\bar{\Phi}_i$&Average valuation of UAV $i$ during time window $\tau$. \\
$p_j$&Payment to UGV $j$. \\
$\mathcal{U}_j$&Utility function of UGV $j$. \\
$\mathcal{S}$&Social surplus of involved entities. \\
$g(i)$&Identity of allocated UGV to UAV $i$ in the auction. \\
$\tau$&Time window of the auction. \\
\hline %\hline
\end{tabular} } }%\vspace{-1mm}
\end{table}

\subsection{Network Model}\label{subsec:networkmodel}
As depicted in Fig.~\ref{fig:intro1}, we consider a typical scenario of VWRUN in a given investigated area, which mainly consists of a swarm of $I$ UAVs, a fleet of $J$ UGVs, and a ground station (GS).

\emph{\textbf{UAVs}.} Due to the limited sensing coverage and energy supply of a single UAV, a swarm of UAVs, denoted by the set $\mathcal{I} = \{1,\cdots, i,\cdots,I\}$, are dispatched to collaboratively execute a common mission (e.g., air quality monitoring and geographic surveying) in the given task area \cite{9035635}. The sensing spot of UAV $i \in \mathcal{I}$ is denoted as a circle with center $\left(x_i,y_i,0\right)$ and radius $R_i$. UAVs can communicate with each other using air-to-air (A2A) communications \cite{9599638}.
Let $T$ denote the finite time horizon, which is evenly divided into $N$ time slots with duration $\Delta_t$ \cite{10106022}, i.e., $T = N\cdot \Delta_t$. The instant 3D location of UAV $i \in \mathcal{I}$ at $t$-th time slot ($1\le t \le T$) is denoted by $\mathbf{l}_i[t] = (x_i[t],y_i[t],z_i[t])$, where $(x_i[t],y_i[t])$ is its instant horizontal coordinate. %As $\Delta_t$ can be sufficiently small, UAV's instant location can be approximately constant at every time slot.
The instant altitude $z_i[t]$ of UAV $i$ satisfies
\begin{equation}\label{eq3-1}
R_i \cot(\theta_i) \le z_{i}[t] \le z_{\max},
\end{equation}
where $\theta_i$ is the maximum detection angle of UAV $i$'s sensor and $z_{\max}$ is its maximum flight altitude. Besides, $\left||\mathbf{l}_i[t+1] - \mathbf{l}_i[t]|\right| \leq {v_{\max}} \Delta_t$, $\forall 1\le t \le T-1$, where ${v_{\max}}$ is the maximum velocity of UAV.

The state-of-charge (SoC) of UAV $i$'s on-board battery at $t$-th time slot is $s_i[t]$, which satisfies $s_{\min}\leq s_i[t] \leq C_i$. Here, $C_i$ is the battery capacity of UAV~$i$ and $s_{\min}$ is the minimum reserved battery energy to prolong the battery lifetime \cite{9632356}. For UAV $i$, when its remaining battery SoC $s_i[t]$ is below the alert level $s_{\min}$, it leaves its sensing spot for recharging and another UAV can cooperatively replace this low-battery UAV $i$ at the target sensing spot to offer uninterrupted sensing service \cite{8648453}. The charging urgency of each UAV $i$ is computed as
\begin{equation}\label{eq3-1-2}
\rho_i[t] = 1 - \frac{s_i[t] - s_{\min}}{C_i}, \,\mathrm{and}~ \rho_i[t] \in [0,1].
\end{equation}

\emph{\textbf{UGVs}.} A fleet of UGVs equipped with wireless charging facilities on the vehicular roofs are deployed in the investigated area to collaboratively offer on-demand wireless energy supply for low-battery UAVs \cite{Renault2014,DSraider2021}. The set of UGVs is denoted as $\mathcal{J} = \{1,\cdots, j,\cdots,J\}$. UGVs are smart vehicles integrated with various advanced sensors to allow self-driving to the rendezvous and perform automatically UAV tracking, launching, and recharging operations on the charging pad on vehicular roofs. Let $C_j$ and $s_j[t]$ be the total/remaining wireless energy supply of UAV recharging {of UGV $j$}, respectively. It is assumed that $s_j[t] \geq \max\{s_i^{\mathrm{sat}} - s_i[t], \forall i \in \mathcal{N}\}$.
Besides, UGVs generally have diverse quality of recharging service (QoRS), which is affected by various factors such as the wireless charging rate and the driving distance to the task area. Let $q_j[t]$ denote the QoRS of UGV $j$ at $t$-th time slot. A higher QoRS indicates the higher charging preference of UAVs.

\emph{\textbf{GS}.} In VWRUN, the aerial UAV subnetwork and the ground vehicular subnetwork are coordinated by the GS (denoted by $\Lambda$) \cite{7317860}. The GS is located at a micro base station and can perform flight planning, flying control, and task assignment for UAVs via ground-to-air (G2A) links. Moreover, after receiving recharging requests from UAVs, the GS can schedule the UGVs with idle wireless chargers in its communication range via infrastructure-to-vehicle (I2V) links \cite{9631953} to offer on-demand recharging services.

\subsection{UAV Energy Consumption and Wireless Charging Model}\label{subsec:WPTmodel}
To avoid collisions and save energy in the flight, UAVs need to horizontally fly over the task area and hover above the assigned task spot to perform sensing missions \cite{10155496}. In energy recharging process, for simplicity, each UAV $i$ vertically descends to the target UGV's proof and vertically ascends to a preset altitude after reaching the satisfactory SoC level $s_i^{\mathrm{sat}}$ \cite{8758340}. According to \cite{7991310}, the required flying power at a constant speed $v_i$ can be approximately attained as:
\begin{equation}\label{eq3-2}
P_i^{\mathrm{fly}} (v_i) = \kappa_1 v_i^3 + (\kappa_2 + \kappa_3) \Psi^{3/2},
\end{equation}
where $\kappa_1, \kappa_2, \kappa_3$ are constant UAV-related factors, $\Psi$ is the thrust of UAV \cite{8758340}. The hovering power of UAV~$i$ is $P_i^{\mathrm{hov}} = (\kappa_2 + \kappa_3)(m g)^{3/2}$, where $m$ is UAV's mass and $g=9.8\,\mathrm{m/s^2}$. Given the fixed descending speed $v_i^{\mathrm{d}}$ and ascending velocity $v_i^{\mathrm{a}}$, the required power of UAV in the descending process and ascending process can be separately expressed as \cite{8758340,7991310}:
\begin{equation}\label{eq3-3}
P_i^{\mathrm{d}} (v_i^{\mathrm{d}}) \!=\! \epsilon_1 m g \left[\sqrt{\frac{(v_i^{\mathrm{d}})^2}{4} \!+\! \frac{m g}{(\epsilon_2)^2}} - \frac{v_i^{\mathrm{d}}}{2} \right] + \kappa_3(m g)^{3/2}\!,\!
\end{equation}
\begin{equation}\label{eq3-4}
P_i^{\mathrm{a}} (v_i^{\mathrm{a}}) \!=\! \epsilon_1 m g \left[\sqrt{\frac{(v_i^{\mathrm{a}})^2}{4} \!+\! \frac{m g}{(\epsilon_2)^2}} + \frac{v_i^{\mathrm{a}}}{2} \right] + \kappa_3(m g)^{3/2}\!,\!
\end{equation}
where $\epsilon_1, \epsilon_2$ are constant UAV-related factors.

Let $P_j^{\mathrm{e}}$ denote the wireless power transferred by the UGV $j$. Then, the battery dynamics of UAV $i$ can be described as a linear model, i.e.,
\begin{equation}\label{eq3-5}
s_i[t+1] \!=\! s_i[t] +
\left[ \begin{matrix}{}
	\alpha _i^1\!&\!	\alpha _i^2\!&\!  \alpha _i^3\!&\!	\alpha _i^4\!&\!	\alpha _i^5\\
\end{matrix} \right] \!\times\! \left[ \begin{array}{c}
	- \eta_i P_i^{\mathrm{fly}}\\
	- \eta_i P_i^{\mathrm{hov}}\\
	- \eta_i P_i^{\mathrm{d}}\\
	- \eta_i P_i^{\mathrm{d}}\\
	\eta_i \eta_j P_j^{\mathrm{e}}\\
\end{array} \right]
\!,\!
\end{equation}
where $\alpha _i^u =\{0,1\},u=\{1,\cdots,5\}$ are binary variables, denoting the state of UAV $i$. $\eta_j$ is the wireless power efficiency of UGV $j$. Here, $\alpha _i^u=1$ means that UAV $i$ is in the corresponding state (i.e., horizontally flying, hovering, vertically descending, vertically ascending, or wireless charging); otherwise, $\alpha _i^u=0$.

\begin{figure}[!t]\setlength{\abovecaptionskip}{0.1cm}
\centering
  \includegraphics[width=9.0cm,height=5.8cm]{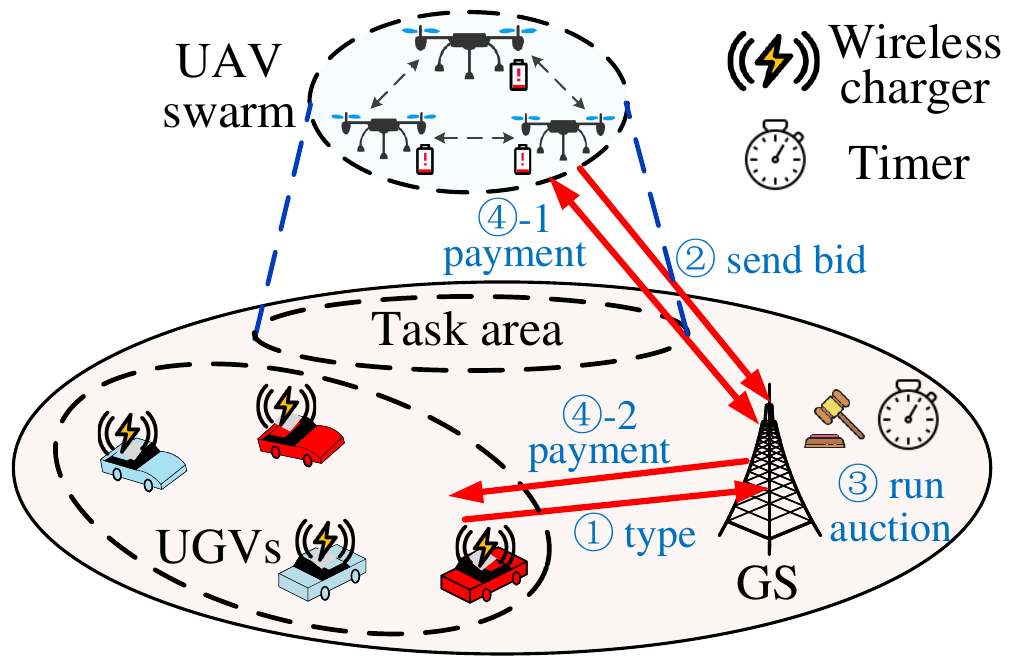}
  \caption{Illustration of online auction-based charging scheduling process among UAVs and UGVs (\ding{172}: Based on the announced auction duration, UGVs with idle wireless chargers compute their types and send to the GS; \ding{173}: UAVs submit sealed bids to the GS; \ding{174}: the GS runs the auction and determines the winners and payments; \ding{174}: each winning UAV delivers its payments to the corresponding UGV via the GS for battery recharging.)}\label{fig:Auction}%\vspace{-2.5mm}
\end{figure}

\section{The Proposed Scheme}\label{sec:FRAMEWORK}
This section first formulates the online charging scheduling and pricing problem for UAVs and UGVs in VWRUNs (in Sect.~\ref{subsec:scheme1}). Then, an auction-based solution with strategy-proofness and envy-freeness is designed, followed by the theoretical analysis of its properties (in Sect.~\ref{subsec:scheme2}).

\subsection{Online Charging Scheduling and Pricing (OCSP) Problem}\label{subsec:scheme1}
As shown in Fig.~\ref{fig:Auction}, the auction-based UAV charging scheduling process is carried out by the GS in an \emph{online} manner, where UAVs are allowed to bid at anytime and the bid collection phase finishes until a maximum waiting time $\tau$ elapses.
Let ${\boldsymbol{{b}}} = \left(b_1,\cdots,b_i,\cdots,b_{I(\tau)} \right)$ denote the bid profile of all UAVs in ${\mathcal{I}}(\tau) \subseteq \mathcal{I}$. Here, ${\mathcal{I}}' = {\mathcal{I}}(\tau)$ is the set of low-battery UAVs with $s_i[t] \leq s_{\min}$, $\forall i \in \mathcal{N}, \forall t \in \tau$. ${\boldsymbol{{b}}_{-i}}$ is the bid profile of other UAVs except UAV $i$.

In the auction, the valuation (i.e., reserve price) of UAV $i$ in a recharging service is associated to its energy state and charging urgency, i.e., $\Phi_i[t] = \Phi_i(\rho_i[t])$. Generally, the higher charging urgency, the larger valuation. Besides, the higher charging urgency, the larger marginal valuation. Hence, $\frac{\mathrm{d} \Phi_i(\rho_i[t])}{\mathrm{d} \rho_i[t]} \!>\! 0$, $\frac{\mathrm{d}^2 \Phi_i(\rho_i[t])}{\mathrm{d} \rho_i[t]^2} \!\geq\! 0$. In the following, we define utility functions of UAVs and UGVs, as well as the social surplus.

\begin{definition}[UAV Utility]\label{definition1}
The utility function of UAV $i \in \mathcal{N}$ is the revenue minuses its payment:
\begin{equation}\label{eq4-1}
\mathcal{U}_i = \left\{\begin{array}{cl}
\sum\nolimits_{j\in \mathcal{J}}{\beta_{i,j} \left[q_j \bar{\Phi}_i - p_j( {\boldsymbol{{b}}})\right]}, & i \in  {\mathcal{I}}',\\
0, & i \!\in\!  {\mathcal{I}}\backslash{\mathcal{I}}'.
\end{array}\right.
\end{equation}
\end{definition}
\emph{Remark.} In Eq. (\ref{eq4-1}), the binary variable $\beta_{i,j}=\{0,1\}$ indicates the allocation outcome, where $\beta_{i,j}=1$ if UAV $i$ is allocated to get charged at UGV $j$. Otherwise, $\beta_{i,j}=0$. $q_j$ is the QoRS of UGV $j$, which is assumed to remain unchanged during the time window $\tau$. $\bar{\Phi}_i$ is the average valuation of UAV $i$ during the time window $\tau$, which is computed as $\bar{\Phi}_i = \lfloor \frac{\tau}{t} \rfloor^{-1} \cdot \sum_{t\in \tau}{\Phi}_i[t]$. $p_j({\boldsymbol{{b}}})$ is the payment to UGV $j$.
\begin{definition}[UGV Utility]\label{definition2}
The utility function of UGV $j \in \mathcal{J}$ is associated with its payment, i.e.,
\begin{equation}\label{eq4-2}
\mathcal{U}_j = \left\{\begin{array}{cl}
\sum\nolimits_{i\in \mathcal{I}'} {\beta_{i,j} p_j({\boldsymbol{{b}}})}, & j \in  {\mathcal{J}}',\\
0, & j \in  {\mathcal{J}}\backslash{\mathcal{J}}'.
\end{array}\right.
\end{equation}
\end{definition}
\emph{Remark.} In Eq. (\ref{eq4-2}), ${\mathcal{J}}'={\mathcal{J}}(\tau)$ denotes the set of UGVs with idle WPT facilities during time window $\tau$, where ${\mathcal{J}}(\tau)\subseteq \mathcal{J}$. %It is assumed that $I(\tau)\geq J(\tau)$, where $I(\tau) =|{\mathcal{I}}'|$ and $J(\tau) =|{\mathcal{J}}'|$.
\begin{definition}[Social Surplus]\label{definition3}
The social surplus is defined as the overall utility of involved entities \cite{8902165}, i.e.,
\begin{equation}\label{eq4-3}
\mathcal{S}  = \sum\limits_{i\in \mathcal{I}} {\mathcal{U}_i} + \sum\limits_{j\in \mathcal{J}} {\mathcal{U}_j} = \sum\limits_{i\in \mathcal{I}'}{\sum\limits_{j\in \mathcal{J}'} {{\beta_{i,j} q_j \bar{\Phi}_i }}}.
\end{equation}
\end{definition}
Besides, the UAV recharging auction should be strategy-proof and envy-free to prevent strategic entities and ensure market fairness, whose formal definitions are given as below.
\begin{definition}[Strategy Proofness]\label{definition4}
The UAV recharging auction $\mathbb{A}$ satisfies strategy proofness if the following two properties hold \cite{10155496}:
\begin{itemize}
  \item (i) Individual rationality (IR): both UAVs and UGVs acquire non-negative utilities in the auction, i.e., $\mathcal{U}_i\geq 0, \forall i \in \mathcal{I}'$ and $\mathcal{U}_j\geq0, \forall j \in \mathcal{J}$.
  \item (ii) Individual compatibility (IC): each UAV can obtain its maximum utility when truthfully choosing its bid strategy, i.e., $\mathcal{U}_i(\bar{\Phi}_i,\boldsymbol{b}_{-i}) \geq \mathcal{U}_i(b_i',\boldsymbol{b}_{-i}), \forall b_i' \neq \bar{\Phi}_i, i \in\mathcal{I}'$.
\end{itemize}
\end{definition}
\begin{definition}[Envy Freeness]\label{definition5}
The UAV recharging auction $\mathbb{A}$ is energy-free if no UAV is happier to exchange its allocation with another UAV to improve its utility when the auction ends \cite{9496271}, i.e.,
\begin{equation}\label{eq4-4}
\mathcal{U}_i(g(i),p_{g(i)}) \geq \mathcal{U}_i (g(k),p_{g(k)}), k\neq i, \forall i,k\in \mathcal{I}',
\end{equation}
where $g(i)$ is the identity of allocated UGV to UAV $i$ and $p_{g(i)}$ is the corresponding payment of UAV $i\in \mathcal{I}'$.
\end{definition}

In VWRUNs, the online charging scheduling and pricing (OCSP) problem is to maximize the social surplus while meeting practical constraints, i.e.,
\begin{equation}\label{eq4-5}
\mathbf{P}1:
\mathop {\max }\limits_{{\boldsymbol{{\beta}}},\,\boldsymbol{p}} \sum\limits_{i\in \mathcal{I}'}{\sum\limits_{j\in \mathcal{J}'} {{\beta_{i,j} q_j \bar{\Phi}_i }}},
\end{equation}
\begin{numcases}{{\rm{s.t.}}}	
\sum\nolimits_{i \in {\mathcal{I}'}} {{\beta_{i,j}}} \le 1 \label{eq:cons1} \hfill \\
\sum\nolimits_{j \in \mathcal{J}'} {{\beta_{i,j}}}  \le 1 \label{eq:cons2} \hfill \\
{\beta_{i,j}} = \left\{0,1\right\},{p_j({\boldsymbol{{b}}})} \ge 0, \forall i \!\in\! {\mathcal{I}'},\forall j \!\in\! \mathcal{J}' \label{eq:cons3} \hfill \\
\sum\nolimits_{u=1}^{5} \alpha_i^u = 1, \forall i \in \mathcal{I}' \label{eq:cons4} \hfill \\
\mathcal{U}_i\geq 0, \mathcal{U}_j\geq 0, \forall i \in \mathcal{I}' ,\forall j \in \mathcal{J}' \label{eq:cons5} \hfill \\
\mathcal{U}_i(\bar{\Phi}_i,\boldsymbol{b}_{-i}) \!\geq\! \mathcal{U}_i(b_i',\boldsymbol{b}_{-i}), \forall b_i' \!\neq\! \bar{\Phi}_i, i \!\in\! \mathcal{I}' \label{eq:cons6} \hfill \\
\mathcal{U}_i(g(i),p_{g(i)}) \!\geq\! \mathcal{U}_i (g(k),p_{g(k)}), \forall k \neq i. \label{eq:cons7} \hfill %k\!\neq\! i, \forall i\,\!,\!k \!\in\! \mathcal{I}'
\end{numcases}

\emph{Remark.} In $\mathbf{P}1$, ${\boldsymbol{{\beta}}} = [{\beta_{i,j}}]_{I(\tau) \times J(\tau)}$ and ${\boldsymbol{p}}=(p_1,\cdots,p_{J(\tau)})$ are decision variables. Constraint (\ref{eq:cons1}) means that a UGV can only offer charging service for at most one UAV during $\tau$. Constraint (\ref{eq:cons2}) implies that a UAV can only recharge at most one UGV during $\tau$. Constraint (\ref{eq:cons4}) is the state constraint of UAV $i$. Constraint (\ref{eq:cons5}) is the IR constraint, and constraint (\ref{eq:cons5}) is the IC constraint.
Both constraints (\ref{eq:cons5})--(\ref{eq:cons6}) refer to the strategy proofness, and constraint (\ref{eq:cons7}) corresponds to the envy freeness.

\begin{theorem}\label{theorem0}
The OCSP problem $\mathbf{P}1$ is NP-hard.
\end{theorem}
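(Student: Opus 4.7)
The plan is to establish NP-hardness of $\mathbf{P}1$ by polynomial-time reduction from a well-established NP-hard problem. Although the raw matching structure of $\mathbf{P}1$ given by constraints (\ref{eq:cons1})--(\ref{eq:cons3}) would on its own yield only a classical bipartite assignment problem solvable in polynomial time (via, e.g., the Hungarian algorithm), the coupling with the pricing variables $p_j$ through the IR, IC, and envy-freeness constraints (\ref{eq:cons5})--(\ref{eq:cons7}) lifts $\mathbf{P}1$ to a combinatorially hard optimization. A natural candidate source problem is the unit-demand envy-free pricing problem, which is classically NP-hard, or alternatively the 0-1 knapsack problem.

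First, I would isolate the combinatorial core of $\mathbf{P}1$ by restricting to a sub-instance in which the state variables $\alpha_i^u$ are fixed (so that (\ref{eq:cons4}) is trivially satisfied) and the QoRS values $q_j$ are homogeneous. In this restricted setting, constraints (\ref{eq:cons6})--(\ref{eq:cons7}) bind the payments $p_j$ to specific functions of the bid profile $\boldsymbol{b}$, leaving the allocation $\beta_{i,j}$ as the only substantive decision variable. Next, I would design a polynomial-time mapping from an arbitrary instance of the reference problem into such a restricted sub-instance of $\mathbf{P}1$: items or consumers map to UAVs with carefully chosen valuations $\bar{\Phi}_i$, and bins or sellers map to UGVs. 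Finally, I would verify that an optimal solution of the constructed $\mathbf{P}1$ instance decodes in polynomial time to an optimal solution of the reference problem, and vice versa; consequently, any polynomial-time algorithm for $\mathbf{P}1$ would imply one for the reference problem, contradicting $P \neq NP$.

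The main obstacle is ensuring that the reduction faithfully preserves optimality under all the auction-theoretic side constraints simultaneously. In particular, an adversarially chosen pricing scheme in $\mathbf{P}1$ could conceivably satisfy envy-freeness and IC without mirroring the feasibility constraints of the reference instance, thereby breaking the correspondence. Overcoming this will require engineering the bid profile $\boldsymbol{b}$, the QoRS values $q_j$, and the valuations $\bar{\Phi}_i$ so that every envy-free and strategy-proof payment vector forces the optimal allocation $\beta$ to respect exactly the combinatorial structure of the reference instance, rather than some degenerate variant that exploits the flexibility of the continuous pricing variables $p_j \ge 0$.
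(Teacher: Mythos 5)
Your proposal does not follow the paper's route, and as it stands it has a genuine gap: it is a reduction \emph{plan} whose central step is never constructed and whose viability is doubtful given the actual structure of $\mathbf{P}1$. The paper argues hardness in one line by claiming that the relaxation of $\mathbf{P}1$ keeping only constraints (\ref{eq:cons1})--(\ref{eq:cons4}) with the payments held constant is an instance of the set cover problem (citing prior work), and then inherits NP-hardness for the full problem. You instead observe, correctly, that the allocation core (\ref{eq:cons1})--(\ref{eq:cons3}) with objective $\sum_{i,j}\beta_{i,j}q_j\bar{\Phi}_i$ is a maximum-weight bipartite assignment problem, solvable in polynomial time, so any hardness would have to come from the coupling with the payment variables through (\ref{eq:cons5})--(\ref{eq:cons7}). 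But you then only gesture at a reduction from unit-demand envy-free pricing or knapsack without exhibiting the mapping, and the obstacle you name at the end is precisely the one that is likely fatal: in the envy-free pricing problem the prices appear in the \emph{objective} (seller revenue), whereas in $\mathbf{P}1$ the objective $\sum_{i,j}\beta_{i,j}q_j\bar{\Phi}_i$ does not depend on $\boldsymbol{p}$ at all; the payments enter only through feasibility. Hence the question is merely whether a supporting price vector exists for the surplus-maximizing matching, and the paper's own Algorithm~\ref{Algorithm1} together with Theorems~\ref{theorem1} and~\ref{theorem2} exhibits VCG-style payments that satisfy IR, IC and envy-freeness for exactly the assortative (sorted) matching that maximizes $\sum_{i,j}\beta_{i,j}q_j\bar{\Phi}_i$ by the rearrangement inequality. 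So the hardness source you propose does not embed into $\mathbf{P}1$ in any obvious way, and your plan gives no mechanism to force it to.

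To turn your idea into a proof you would need either (i) an explicit polynomial-time reduction in which every feasible price vector of the constructed $\mathbf{P}1$ instance certifies a solution of the source instance and conversely, together with an argument for why the existence question for IR/IC/EF prices is itself hard (it is not, for the standard VCG construction used in the paper), or (ii) to abandon this route and, like the paper, identify a restricted family of instances of $\mathbf{P}1$ that is syntactically identical to a known NP-hard allocation problem. Note also that a reduction from knapsack is a poor fit here, since constraints (\ref{eq:cons1})--(\ref{eq:cons2}) impose unit capacities on both sides and leave no room for the aggregate capacity constraint that drives knapsack hardness. In short, your diagnosis of where hardness would have to live is sharper than the paper's one-line relaxation argument, but your proposal stops exactly at the point where the proof would have to be done, and the specific source problems you suggest do not appear to be reducible to $\mathbf{P}1$ as formulated.
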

\begin{proof}
Based on \cite{7293173}, the relaxed version of problem $\mathbf{P}1$ with constraints (\ref{eq:cons1})--(\ref{eq:cons4}) and the constant payment ${\boldsymbol{p}}$ is a typical set cover problem and is proved to be NP-hard. As such, the raw OCSP problem $\mathbf{P}1$ is NP-hard. Theorem~\ref{theorem0} is proved.
\end{proof}

\subsection{Strategy-Proof and Envy-Free Auction Mechanism}\label{subsec:scheme2}
As the OCSP problem $\mathbf{P}1$ is NP-hard, in this subsection, we devise a practical heuristic auction mechanism to derive its near-optimal solution with polynomial complexity, while satisfying strategy proofness and envy freeness.
%we design a strategy-proof and envy-free auction mechanism to resolve the above OCSP problem (i.e., $\mathbf{P}1$) with polynomial complexity.
The key phases in our proposed online auction mechanism is presented {in Algorithm~\ref{Algorithm1}.} %as below.

\begin{algorithm}[t!]\begin{footnotesize}
   \caption{\small{{Online Strategy-Proof and Envy-Free Auction Algorithm for UAVs and UGVs in VWRUNs}}}\label{Algorithm1}
   {\begin{algorithmic}[1]
   \STATE \emph{\textbf{{Phase 1}} (Type Evaluation for UAVs and UGVs):}
   \STATE Initialize the time window $\tau$, UAV type (i.e., charging urgency) $\rho_i[t]$, UGV type (i.e., QoRS) $q_j[t]$, the set of UAVs with charging desires (i.e., $\mathcal{I}'$), and the set of UGVs with idle charging facilities (i.e., $\mathcal{J}'$). Then, UAV's average valuation $\bar{\Phi}_i$, $\forall i \in \mathcal{I}'$ can be calculated.
  \STATE Initialize all elements in ${\boldsymbol{{\beta}}}$ and ${\boldsymbol{{p}}}$ with $0$.
  \STATE Initialize $\mathcal{W}=\emptyset$, $\mathcal{L}=\mathcal{I}'$, and $\mathcal{D}=\mathcal{J}'$.
 \STATE \emph{\textbf{{Phase 2}} (Biding and Allocation):}
 \STATE Each UAV $i\in {\mathcal{I}'}$ determines its sealed-bid strategy $b_i$ based on its valuation and sends it to the GS.
       \STATE The GS obtains the latest types of UGVs in $\mathcal{J}'$, and re-sorts the UGVs in descending order of their types, i.e., $q_1 \geq q_2 \geq \cdots \geq q_{J(\tau)}>0$. The sorted set of $\mathcal{J}'$ is denoted by ${\mathcal{J}''}$.
       \STATE UAVs in set ${\mathcal{I}'}$ are sorted in descending order of their valuations, i.e., $\bar{\Phi}_1 \geq \bar{\Phi}_2 \geq \cdots \geq \bar{\Phi}_{I(\tau)}\geq 0$. The new sorted set is denoted as ${\mathcal{I}''}$.
       \STATE The GS sorts the UAVs in ${\mathcal{I}''}$ in descending order of their submitted bids, and allocates the WPT facilities mounted on UGVs in ${\mathcal{J}''}$ to the UAVs starting from the maximum UGV type. The maximum type of UGV is allocated to the UAV with the largest bid (i.e., $g(1)$), the second highest type of UGV to $g(2)$, and so on, down to the type of UGV $K$, where $K = \min\{I(\tau),J(\tau)\}$, $I(\tau) =|{\mathcal{I}}'|$, and $J(\tau) =|{\mathcal{J}}'|$.
       \STATE Update $\mathcal{W} = \mathcal{W}\cup \{i|i\leq K, \forall i \in {\mathcal{I}''}\}$, $\mathcal{L} = \mathcal{L}\backslash \{i|i\leq K, \forall i \in {\mathcal{I}''}\}$, $\mathcal{D} = \mathcal{D}\cup \{j|j\leq K, \forall j \in {\mathcal{J}''}\}$.
 \STATE \emph{\textbf{{Phase 3}} (Payment Determination):}
   \STATE The payment of the last UAV $g(K)\in{\mathcal{W}}$ is\vspace{-4mm}
      \begin{equation}\label{eq4-6}
        p_{g(K)} \!=\! \left\{\begin{array}{cl}
        q_{J(\tau)}b_{g(J(\tau)+1)}, & J(\tau) \!<\! I(\tau),\\
        0, & otherwise.
        \end{array}\right.
      \end{equation}\vspace{-5mm}
      \STATE The payment of UAV $g(j)\in{\mathcal{W}}$ with $j<\min\{I(\tau),J(\tau)\}$ that wins to charge at UGV $j\in{\mathcal{J}''}$ is determined based on the negative externality that it imposes on others, i.e.,\vspace{-3mm}
      \begin{align}\label{eq4-7}
         &p_{g(j)}({\boldsymbol{{b}}})= \mathcal{S}_{{\mathcal{I}''}\backslash\{g(j)\}}^{\mathcal{J}''} - \mathcal{S}_{{\mathcal{I}''}\backslash\{g(j)\}}^{\mathcal{J}''\backslash\{j\}} \nonumber \\
         &=\Big[ \sum\nolimits_{k=1}^{j-1}{q_k b_{g(k)}} + \sum\nolimits_{k=j+1}^{K}{q_{k-1} b_{g(k)}} \Big]\nonumber \\
         & ~~~~- \Big[ \sum\nolimits_{k=1}^{j-1}{q_k b_{g(k)}} + \sum\nolimits_{k=j+1}^{K}{q_{k} b_{g(k)}} \Big]\nonumber \\
         &= \sum\nolimits_{k=j}^{K-1}{(q_k - q_{k+1}) b_{g(k+1)}}.
      \end{align}\vspace{-5mm}
   \STATE \emph{\textbf{{Phase 4}} (Wireless Charging of Winning UAVs):}
\STATE Each winner UAV $i=g(j)$ and its allocated UGV $j$ drive to their jointly produced rendezvous for battery recharging, and the winner UAV $i=g(j)$ pays the corresponding payment to the UGV $j$.
\STATE For loser UAVs, they can participate in the next-round auction, together with the new UAVs with charging desires, to compete for battery charging. Alternatively, losing UAVs can also fly to a nearby static swap/charging station to replenish energy.
\end{algorithmic}}\end{footnotesize}%\vspace{-1mm}
\end{algorithm}

{Specifically, phase 1 (lines 1--4) evaluates the type information of UAVs and UGVs in the time window $\tau$; phase 2 (lines 5--10) determines the bidding strategy of each UAV and the auction winners; phase 3 (lines 11--13) determines the payment of each UAV in the winner set ${\mathcal{W}}$; and phase 4 (lines 14--16) performs wireless charging for each pair of matched UAV and UGV in the winner set and enforces the financial settlement.}

\emph{Remark.} In Eq.~(\ref{eq4-7}), $\mathcal{S}_{{\mathcal{I}''}\backslash\{g(j)\}}^{\mathcal{J}''}$ means the social surplus when UAV $i$ leaves the auction, and $\mathcal{S}_{{\mathcal{I}''}\backslash\{g(j)\}}^{\mathcal{J}''\backslash\{j\}}$ is the actual social surplus when UAV $i$ participates in the auction without UGV $j$. Via recursive operations, the payment for UGV $j<\min\{I(\tau),J(\tau)\}$ can be rewritten as:
\begin{equation}
p_{g(j)}({\boldsymbol{{b}}}) = (q_j - q_{j+1})b_{g(k+1)} + p_{g(j+1)}({\boldsymbol{{b}}}).
\end{equation}

In the following, we analyze the desirable properties of the proposed auction mechanism in terms of strategy-proofness, envy-freeness, allocation stability, and computational complexity in the following theorems and corollaries.

\begin{theorem}\label{theorem1}
In the proposed auction mechanism, participants always attain non-negative utilities; and the truth-telling bidding strategy is the dominant equilibria for all participating UAVs, i.e., $\mathbb{A}$ is strategy-proof.
\end{theorem}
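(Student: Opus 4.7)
The plan is to establish the two defining properties of strategy-proofness separately, individual rationality (IR) and incentive compatibility (IC), by exploiting the descending sort on UGV types $q_1\geq q_2\geq \cdots \geq q_{K}$ and the telescoping structure of the payment rule in Eq.~(\ref{eq4-7}).

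\emph{Step 1 (IR).} For any UGV $j\in\mathcal{J}'$, the utility $\mathcal{U}_j=\beta_{i,j}\,p_j(\boldsymbol{b})\geq 0$ follows immediately, because every payment produced by Eq.~(\ref{eq4-6})--(\ref{eq4-7}) is a non-negative combination of bids (the differences $q_k-q_{k+1}$ are non-negative by the descending sort in line~7). For a winning UAV $i=g(j)$ with $j<K$, I would bound the payment by replacing every $b_{g(k+1)}$, $k\geq j$, with the larger quantity $b_{g(j)}$ (valid because UAVs were sorted in descending order of bids in line~8), obtaining
\begin{equation*}
p_{g(j)}(\boldsymbol{b})\;\leq\;\sum_{k=j}^{K-1}(q_k-q_{k+1})\,b_{g(j)}\;=\;(q_j-q_K)\,b_{g(j)}\;\leq\;q_j\,b_{g(j)}.
\end{equation*}
For the last winner $g(K)$, either $J(\tau)<I(\tau)$ in which case $p_{g(K)}=q_{J(\tau)}b_{g(J(\tau)+1)}\leq q_{J(\tau)}b_{g(K)}$, or $p_{g(K)}=0$; both cases give $p_{g(K)}\leq q_K b_{g(K)}$. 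Substituting the truthful bid $b_{g(j)}=\bar{\Phi}_{g(j)}$ into Eq.~(\ref{eq4-1}) yields $\mathcal{U}_{g(j)}\geq 0$. Losing UAVs and non-participating UGVs trivially have utility $0$.

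\emph{Step 2 (IC).} Fix UAV $i$ with true valuation $\bar{\Phi}_i$ and fix the bid profile $\boldsymbol{b}_{-i}$ of other UAVs. Let $j^{*}$ be the slot (rank) that UAV $i$ obtains when bidding truthfully (set $j^{*}=\infty$ if $i$ loses). I would use the identity, obtained by unrolling Eq.~(\ref{eq4-7}),
\begin{equation*}
p_{g(j)}(\boldsymbol{b}) - p_{g(j-1)}(\boldsymbol{b})\;=\;-(q_{j-1}-q_j)\,b_{g(j)},
\end{equation*}
to compare the utility of truthful bidding against two types of deviations:
\textbf{(i)} an upward deviation that moves $i$ from slot $j^{*}$ to a higher slot $j<j^{*}$, displacing the incumbents downward; \textbf{(ii)} a downward deviation that moves $i$ to a lower slot $j>j^{*}$ or drops $i$ out of $\mathcal{W}$. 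In case (i), the change in utility can be written as a telescoping sum $\sum_{k=j}^{j^{*}-1}(q_k-q_{k+1})\bigl(\bar{\Phi}_i - b_{g(k+1)}\bigr)$, where each $b_{g(k+1)}$ is the bid that pushed $i$ past rank $k+1$ and is therefore at least $\bar{\Phi}_i$ (otherwise $i$ would already have outranked that UAV truthfully); each term is $\leq 0$. Case (ii) is symmetric: every term in the corresponding telescoping sum satisfies $b_{g(k+1)}\leq \bar{\Phi}_i$, so the utility change is again non-positive. A similar bound handles losing UAVs that try to deviate upward into $\mathcal{W}$, using the tie-breaking rule at slot $K$ with payment given by Eq.~(\ref{eq4-6}).

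\emph{Anticipated obstacle.} The subtle step is case (i): I must verify that a deviation does not merely shift $i$'s rank but also updates every payment in positions $j,j+1,\dots,j^{*}-1$, and that the externality formula Eq.~(\ref{eq4-7}) still applies after the re-sort. I would therefore formalize the proof by expressing the post-deviation allocation as a cyclic shift of the pre-deviation allocation and showing that the per-slot utility difference telescopes cleanly because the pricing rule Eq.~(\ref{eq4-7}) coincides with the Vickrey--Clarke--Groves payment restricted to the greedy assignment of descending-type UGVs to descending-bid UAVs. Once this equivalence is established, the IC property reduces to the standard VCG monotonicity argument, and combined with Step~1 completes the proof that $\mathbb{A}$ is strategy-proof.
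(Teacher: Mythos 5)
Your proposal is correct, and its two halves relate to the paper's proof differently. For IR, you and the paper do essentially the same thing: the paper rewrites $\mathcal{U}_{g(j)}$ via the externality form of the payment and telescopes it into $\sum_{l=j}^{|\mathcal{J}''|-1} q_l\bigl(b_{g(l)}-b_{g(l+1)}\bigr)+q_{|\mathcal{J}''|}b_{g(|\mathcal{J}''|)}\ge 0$, whereas you bound the payment directly by $p_{g(j)}\le (q_j-q_K)b_{g(j)}\le q_j b_{g(j)}$ using the descending bid order; both arguments rest on the same monotonicity facts and are interchangeable. For IC, your route is genuinely different: the paper disposes of it in one line by observing that the payment rule is the VCG (externality) payment and citing VCG truthfulness, while you carry out a first-principles deviation analysis, writing the utility change of an upward or downward rank deviation as a telescoping sum $\sum_k (q_k-q_{k+1})(\bar{\Phi}_i - b_{\cdot})$ whose terms have the right sign because every bid that outranks (resp.\ is outranked by) UAV $i$ under truthful bidding is at least (resp.\ at most) $\bar{\Phi}_i$. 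Your version is more self-contained and actually verifies, rather than assumes, that the re-sorted allocation plus the recursive payment rule behaves like VCG on this assortative (greedy) assignment -- which is exactly the point the paper takes on faith; the paper's version is shorter but leans entirely on the cited VCG property. One small slip to fix when you write it out: in your case (i) the displaced bids entering the telescoping sum are $b_{g(k)}$ for $k=j,\dots,j^{*}-1$ (the UAVs originally occupying slots $j$ through $j^{*}-1$), not $b_{g(k+1)}$; the sign conclusion is unaffected since all of these bids are $\ge \bar{\Phi}_i$, but the indices as stated do not match the utility-difference identity you derive them from.
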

\begin{proof}
According to Definition~\ref{definition4}, it suffices to prove that both IR and IC hold in $\mathbb{A}$. We first prove the IR. Obviously, according to Eqs. (\ref{eq4-2})--(\ref{eq4-3}), the utilities of UAVs and UGVs equal to zero if they do not participate in the auction $\mathbb{A}$. For the participating UGVs, as the payments to them are always non-negative, their utilities are no less than zero. For any participating UAV $i=g(j)$, its utility can be reformulated as:
\begin{align}
 \mathcal{U}_{g(j)}&={q_j b_{g(j)}} + \mathcal{S}_{{\mathcal{I}''}\backslash\{g(j)\}}^{\mathcal{J}''\backslash\{j\}} - \mathcal{S}_{{\mathcal{I}''}\backslash\{g(j)\}}^{\mathcal{J}''} \nonumber \\
 &= \sum\nolimits_{l=1}^{|\mathcal{J}''|}{q_l b_{g(l)}} - \mathcal{S}_{{\mathcal{I}''}\backslash\{g(j)\}}^{\mathcal{J}''} \nonumber \\
 &= \sum\nolimits_{l=j}^{|\mathcal{J}''|-1}{q_l (b_{g(l)} - b_{g(l+1)})} + {q_{|\mathcal{J}''|} b_{g(|\mathcal{J}''|)}} \nonumber \\
 &\geq 0.
\end{align}
Thereby, for all participating UAVs and UGVs, their utilities are always non-negative. Next, we prove the IC. Notably, the payment decision strategy in our proposed auction mechanism follows the Vickrey–Clarke–Groves (VCG) mechanism. As truth-telling is a well-known property of the VCG mechanism \cite{Benjamin2007Internet}, our auction $\mathbb{A}$ also satisfies the truth-telling property (i.e., IC). Theorem~\ref{theorem1} is proved.
\end{proof}

{\emph{Remark.} Theorem~2 shows that our proposed auction algorithm can resist strategic UAVs/UGVs and prevent market manipulation in practical energy recharging services by enforcing IC constraints. Besides, as IR constraints are satisfied, individual UAVs/UGVs can be motivated to join the energy recharging system to gain benefits.
}

\begin{theorem}\label{theorem2}
The proposed auction $\mathbb{A}$ is envy-free, if (i) $b_{g(j)} \in \left[\bar{\Phi}_{g(j)}, \bar{\Phi}_{g(j-1)} \right]$, $1<j\leq K$, and (ii) ${g(j)} = j$, $1\leq j\leq K$ hold.
\end{theorem}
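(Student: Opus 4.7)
The plan is to verify the envy-freeness condition $\mathcal{U}_{g(j)}(g(j),p_{g(j)}) \geq \mathcal{U}_{g(j)}(g(k),p_{g(k)})$ for every pair $j\neq k$ in $\{1,\ldots,K\}$ by exploiting the recursive structure of the VCG-style payments derived just before the theorem. Substituting Definition~\ref{definition1} and the identity $p_{g(j)}=(q_j-q_{j+1})b_{g(j+1)}+p_{g(j+1)}$, the inequality we must establish collapses into
\begin{equation*}
(q_j-q_k)\,\bar{\Phi}_{g(j)} \;\geq\; p_{g(j)} - p_{g(k)}.
\end{equation*}
Using condition (ii), $g(\ell)=\ell$, so the payment gap telescopes as $p_{g(j)}-p_{g(k)}=\sum_{\ell=j}^{k-1}(q_\ell-q_{\ell+1})b_{g(\ell+1)}$ when $k>j$, and with the opposite sign when $k<j$. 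In parallel, the quality gap decomposes as $q_j-q_k=\sum_{\ell=j}^{k-1}(q_\ell-q_{\ell+1})$, with each term non-negative because Phase~2 sorts the UGVs so that $q_1\geq q_2\geq\cdots\geq q_K$.

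First I would handle the downward-deviation case $k>j$, where UAV $g(j)$ contemplates a lower-quality UGV. After aligning the two telescoping sums term by term, envy-freeness reduces to $\bar{\Phi}_{g(j)}\geq b_{g(\ell+1)}$ for each $\ell\in\{j,\ldots,k-1\}$. Condition (i) supplies the upper bound $b_{g(\ell+1)}\leq \bar{\Phi}_{g(\ell)}$, and the descending order of valuations together with $\ell\geq j$ (again via condition (ii)) gives $\bar{\Phi}_{g(\ell)}\leq\bar{\Phi}_{g(j)}$, chaining into the required estimate. Next I would treat the symmetric upward-deviation case $k<j$, where the inequality flips and reduces to $b_{g(\ell+1)}\geq\bar{\Phi}_{g(j)}$ for $\ell\in\{k,\ldots,j-1\}$; this time the lower bound in condition (i), $b_{g(\ell+1)}\geq\bar{\Phi}_{g(\ell+1)}$, combined with $\ell+1\leq j$ and the descending valuation order, closes the argument.

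Finally I would cover the boundary case in which UAV $g(j)$ envies an unmatched position (relevant when $J(\tau)<I(\tau)$ and $j=K$), using $p_{g(K)}=q_{J(\tau)}b_{g(J(\tau)+1)}$ from Eq.~(\ref{eq4-6}) and noting that any unallocated UAV attains zero utility by Definition~\ref{definition1}. The main obstacle I anticipate is bookkeeping: the two telescoping sums must be aligned index-by-index and the two sides of condition (i) must be applied in opposite directions depending on whether the hypothetical deviation moves up or down the sorted UGV list. Once that alignment is laid out cleanly, each pairwise comparison follows from a one-line application of the monotonicity of $\bar{\Phi}_{\cdot}$ guaranteed by condition (ii) and the bid-sandwich condition (i).
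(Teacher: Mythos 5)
Your proposal is correct and follows essentially the same route as the paper's proof: reduce envy-freeness to $(q_j-q_k)\bar{\Phi}_{g(j)} \geq p_{g(j)}-p_{g(k)}$, expand the payment gap via the telescoping VCG recursion, and close each case with the bid-sandwich condition (i) together with the sorted orders guaranteed by (ii). The only differences are cosmetic — you compare the two telescoping sums term by term using valuation monotonicity, where the paper substitutes a single bid bound and then telescopes recursively using bid monotonicity, and you additionally flag the winner-versus-unmatched boundary case that the paper leaves implicit.
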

\begin{proof}
According to Definition~\ref{definition5}, it suffices to prove that any UAV ${g(j)}\in \mathcal{W}$ is indifferent between charging at UGV $j$ at price $p_{g(j)}$ and charging at UGV $l$ at price $p_{g(l)}$, where $j\ne l$. Without loss of generality, we consider the following two cases.

\underline{Case 1}: $j>l$. In this case, as $b_{g(j)} \geq \bar{\Phi}_{g(j)}$ and $q_j\leq q_l$, we have $\left( q_j-q_l \right) \bar{\Phi} _{g\left( j \right)} \geq \left( q_j-q_l \right) b _{g\left( j \right)}$ and $\left( q_{j-2}-q_{j-1} \right)b _{g\left( j-1 \right)} \geq \left( q_{j-2}-q_{j-1} \right)b _{g\left( j \right)}$. The utility difference of UAV ${g(j)}$ between charging at UGV $j$ and charging at UGV $l$ is:
\begin{equation}\label{eq-p2-1}
\begin{aligned}
&\Delta\mathcal{U}_{j,l} = q_j\bar{\Phi} _{g\left( j \right)}-p_{g\left( j \right)}-\left[ q_l\bar{\Phi} _{g\left( j \right)}-p_{g\left( l \right)} \right]\\
&=\sum_{k=l}^{j-1}{\left( q_k-q_{k+1} \right) b_{g\left( k+1 \right)}}+\left( q_j-q_l \right) \bar{\Phi} _{g\left( j \right)}\\
&\ge \sum_{k=l}^{j-1}{\left( q_k-q_{k+1} \right) b_{g\left( k+1 \right)}}+\left( q_j-q_l \right) b_{g\left( j \right)}.
\end{aligned}
\end{equation}
If $j = l + 1$, the following inequality holds:
\begin{equation}
\Delta\mathcal{U}_{j,l}\geq \left( q_{j-1}-q_j \right) b_{g( j )} + \left( q_j-q_{j-1} \right) b_{g( j )}\geq 0.
\end{equation}
If $j > l + 1$, via recursive operations, we have
\begin{equation}\label{eq-p2-2}
\begin{aligned}
&\sum\nolimits_{k=l}^{j-1}{\left( q_k-q_{k+1} \right) b_{g\left( k+1 \right)}}+\left( q_j-q_l \right) b_{g\left( j \right)} \\
&\ge \sum_{k=l}^{j-2}{\left( q_k-q_{k+1} \right) b_{g\left( k+1 \right)}}+\left( q_{j-1}-q_l \right) b_{g\left( j \right)}\\
&\ge \cdots \ge \left( q_l-q_l \right) b_{g\left( j \right)}=0.
\end{aligned}
\end{equation}
Hence, it can be concluded that $\Delta\mathcal{U}_{j,l}\geq 0$.

\underline{Case 2}: $j<l$. In this case, as $b_{g(j)} \leq \bar{\Phi}_{g(j-1)}$ and $q_j\geq q_l$, we have $\left( q_j-q_{l} \right) \bar{\Phi}_{g(j)}\geq \left( q_j-q_{l} \right) b_{g\left( j+1 \right)}$ and $\left( q_l-q_{l-1} \right) b_{g\left( l \right)}\geq \left( q_l-q_{l-1} \right) b_{g\left( l-1 \right)}$.
Then, we can obtain:
\begin{equation}\label{eq-p2-3}
\begin{aligned}
&\Delta\mathcal{U}_{j,l} = q_j\bar{\Phi} _{g\left( j \right)}-p_{g\left( j \right)}-\left[ q_l\bar{\Phi} _{g\left( j \right)}-p_{g\left( l \right)} \right]\\
&=\sum_{k=j}^{l-1}{\left( q_{k+1}-q_k \right) b_{g\left( k+1 \right)}}+\left( q_j-q_l \right) \bar{\Phi} _{g\left( j \right)}\\
&\ge \sum_{k=j}^{l-1}{\left( q_{k+1}-q_k \right) b_{g\left( k+1 \right)}}+\left( q_j-q_l \right) b_{g\left( j+1 \right)}.
\end{aligned}
\end{equation}
If $j = l - 1$, the following inequality holds:
\begin{equation}
\Delta\mathcal{U}_{j,l}\geq \left( q_{j+1}-q_j \right) b_{g( j+1 )} + \left( q_j-q_{j+1} \right) b_{g( j+1 )}\geq 0.
\end{equation}
Otherwise, if $j < l + 1$, via recursive operations, we have
\begin{align}
&\sum\nolimits_{k=j}^{l-1}{\left( q_{k+1}-q_k \right) b_{g\left( k+1 \right)}} \nonumber \\
&\geq \sum_{k=j}^{l-2}{\left( q_{k+1}-q_k \right) b_{g\left( k+1 \right)}} + \left( q_{l}-q_{l-1} \right)b_{g\left( l-1 \right)}\nonumber \\
&\geq \sum_{k=j}^{l-3}{\left( q_{k+1}-q_k \right) b_{g\left( k+1 \right)}} + \left( q_{l}-q_{l-2} \right)b_{g\left( l-2 \right)}\nonumber \\
&\geq \cdots \geq \left( q_l-q_j \right) b_{g\left( j+1 \right)}.
\end{align}
Hence, $\Delta\mathcal{U}_{j,l} \geq 0$ holds. Theorem~\ref{theorem2} is proved.
\end{proof}

\begin{corollary}\label{corollary1}
In the proposed auction mechanism, the truth-telling equilibrium is also an envy-free Nash equilibrium (EFNE).
\end{corollary}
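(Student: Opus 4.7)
The plan is to derive the corollary as a direct consequence of Theorems~\ref{theorem1} and~\ref{theorem2}. From Theorem~\ref{theorem1} we already know that the profile $b_i = \bar{\Phi}_i$ for every UAV $i\in \mathcal{I}'$ is a (weakly) dominant-strategy equilibrium, and hence in particular a Nash equilibrium. The task therefore reduces to verifying that this specific equilibrium profile meets the two sufficient conditions of Theorem~\ref{theorem2}, so that envy-freeness applies on top of the Nash property.

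First I would check condition (ii), namely $g(j)=j$ for all $1\le j\le K$. Under truthful bidding, $b_i=\bar{\Phi}_i$, so the descending re-sorting of submitted bids in Phase~2 of Algorithm~\ref{Algorithm1} produces exactly the same order as the descending re-sorting of valuations that defines $\mathcal{I}''$. Since the allocation rule pairs the UAV with the $j$-th largest bid to the UGV with the $j$-th largest QoRS, we immediately obtain $g(j)=j$.

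Second I would verify condition (i), $b_{g(j)}\in[\bar{\Phi}_{g(j)},\bar{\Phi}_{g(j-1)}]$ for $1<j\le K$. With $g(j)=j$ and truth-telling we have $b_{g(j)}=b_j=\bar{\Phi}_j=\bar{\Phi}_{g(j)}$, so the left endpoint of the interval is attained with equality. The right endpoint $\bar{\Phi}_{g(j)}\le \bar{\Phi}_{g(j-1)}$ is simply the descending-order property of the sorted valuations $\bar{\Phi}_1\ge \bar{\Phi}_2\ge\cdots\ge \bar{\Phi}_{I(\tau)}$. Hence both hypotheses of Theorem~\ref{theorem2} are satisfied, so the auction outcome under truth-telling is envy-free; combined with the Nash-equilibrium property inherited from Theorem~\ref{theorem1}, this gives an EFNE and establishes the corollary.

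The argument is essentially a bookkeeping composition of earlier results, so there is no genuinely hard step; the only thing to be careful about is confirming that the strict inequalities implicit in the definition of $\mathcal{I}''$ (ties in valuations) do not break the identification $g(j)=j$, which is handled by observing that any tie-breaking rule used for sorting bids coincides with the one used for sorting valuations when bids equal valuations, so the allocation remains consistent with $g(j)=j$.
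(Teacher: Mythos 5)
Your proposal is correct and follows essentially the same route as the paper: it verifies that under truthful bidding $b_{g(j)}=\bar{\Phi}_{g(j)}\le\bar{\Phi}_{g(j-1)}$ and $g(j)=j$, so the two sufficient conditions of Theorem~\ref{theorem2} hold, and then combines the resulting envy-freeness with the dominant-strategy (hence Nash) property from Theorem~\ref{theorem1}. Your version is in fact slightly more careful than the paper's, which simply takes $g(j)=j$ as given rather than deriving it from the coincidence of the bid and valuation orderings (and the tie-breaking remark you add).
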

\begin{proof}
As the truth-telling equilibrium satisfies $b_{g(j)} =  \bar{\Phi}_{g(j)}$ and $b_{g(j)} < \bar{\Phi}_{g(j-1)}$, given ${g(j)} = j$ ($1\leq j\leq K$), then we have $\Delta\mathcal{U}_{j,l}\geq 0$ if $j>l$ and $\Delta\mathcal{U}_{j,l}> 0$ if $j<l$. Thereby, the truth-telling equilibrium is an EFNE. Corollary~\ref{corollary1} is proved.
\end{proof}

%\begin{theorem}\label{theorem3}
%upper and lower bounds of social surplus
%\end{theorem}

\begin{corollary}\label{corollary2}
The outcome of the proposed auction mechanism is a stable assignment.
\end{corollary}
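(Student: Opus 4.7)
The plan is to formalize ``stable assignment'' in the standard two-sided matching sense and then derive it from the sorted structure of the allocation in Phase~2 together with the envy-freeness result of Theorem~\ref{theorem2}. First I would make stability precise: an outcome $(\boldsymbol{\beta},\boldsymbol{p})$ is stable if (i) it is individually rational for every participant, and (ii) there is no blocking pair $(i,j)\in\mathcal{I}''\times\mathcal{J}''$ such that UAV $i$ strictly prefers being matched to UGV $j$ at some feasible price over its current assignment, while UGV $j$ simultaneously strictly prefers being matched to $i$ over its current partner. Individual rationality has already been established in Theorem~\ref{theorem1}, so the whole argument reduces to ruling out blocking pairs.

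Next I would handle the UAV side by invoking Theorem~\ref{theorem2} directly. Under the truth-telling equilibrium (Corollary~\ref{corollary1}) and ${g(j)}=j$, no UAV $g(k)$ can strictly improve its utility by exchanging allocation with any other matched UAV $g(l)$, since $\Delta\mathcal{U}_{k,l}\ge 0$ for every $l\ne k$. Consequently, any candidate blocking pair must involve either an unmatched UAV coveting a matched UGV, or a matched UAV coveting an unmatched UGV; I would dispatch these two cases by noting that, by construction in Phase~2, the first $K=\min\{I(\tau),J(\tau)\}$ UAVs (ranked by bids) are matched to the first $K$ UGVs (ranked by QoRS), so any unmatched UAV has a bid not exceeding that of any winner, and any unmatched UGV has QoRS not exceeding that of any allocated UGV, which immediately prevents a strict improvement on at least one side.

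Then I would close the UGV side by examining the payment rule in Eq.~(\ref{eq4-7}). A UGV's preferences among potential partners are driven monotonically by the payment received, so I would show that the recursive identity $p_{g(j)}=(q_j-q_{j+1})b_{g(j+1)}+p_{g(j+1)}$ together with the descending orderings $q_1\ge\cdots\ge q_K$ and $b_{g(1)}\ge\cdots\ge b_{g(K)}$ implies $p_{g(1)}\ge p_{g(2)}\ge\cdots\ge p_{g(K)}$, i.e., higher-ranked UGVs are paid at least as much as lower-ranked ones. Hence any UGV $j$ swapping its current winner $g(j)$ for a lower-bidding UAV $g(l)$ with $l>j$ would be weakly worse off, completing the no-blocking-pair argument in tandem with the UAV-side analysis.

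The main obstacle I anticipate is the second step: carefully specifying the counterfactual price at which a hypothetical deviating pair $(i,j)$ would trade, since the VCG-style payments in Eq.~(\ref{eq4-7}) depend on the entire bid profile and not just the local pair. I plan to circumvent this by arguing at the level of utilities rather than prices---using envy-freeness to bound the UAV's utility uniformly across UGVs and using the monotonicity of the payment sequence to bound the UGV's utility uniformly across winners---so that the non-existence of a simultaneously profitable deviation follows without committing to a particular off-equilibrium pricing rule.
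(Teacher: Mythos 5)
Your plan is sound, but it proves more than the paper does and therefore takes a genuinely different route. The paper's own proof is essentially one line: it defines stability purely from the UAV side --- no UAV can profit by aborting its assigned UGV and re-matching with another UGV at that UGV's existing payment --- and observes that this is exactly the inequality $q_j\bar{\Phi}_{g(j)}-p_{g(j)}-\bigl[q_l\bar{\Phi}_{g(j)}-p_{g(l)}\bigr]\geq 0$ already established in both cases of Theorem~\ref{theorem2}; stability is thus a direct restatement of envy-freeness. Your second paragraph reproduces this core step, so you capture the paper's argument, but you then go further: you formalize stability in the standard two-sided sense (no blocking pair at any feasible price), which forces you to handle unmatched agents and the UGV side, and to introduce the auxiliary lemma that the payment sequence is monotone, $p_{g(1)}\geq\cdots\geq p_{g(K)}$ (which does follow from the recursion $p_{g(j)}=(q_j-q_{j+1})b_{g(j+1)}+p_{g(j+1)}$ with $q_j\geq q_{j+1}$ and nonnegative bids). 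What your stronger formulation buys is a genuine matching-theoretic stability guarantee covering UGV deviations and losing UAVs, which the paper's proof never addresses; what it costs is exactly the difficulty you flag, namely pinning down the counterfactual price of a deviating pair. Your proposed fix (a blocking pair needs a price strictly above the UGV's current payment, while envy-freeness caps the UAV's willingness to pay at that payment) does close the matched-versus-matched case, but note that ruling out a pair formed by the highest losing UAV and a matched UGV additionally requires the boundary payment rule of Eq.~(\ref{eq4-6}) (the last winner pays $q_{J(\tau)}b_{g(J(\tau)+1)}$ when $J(\tau)<I(\tau)$), a detail your plan glosses over; without that term the losing UAV could profitably outbid at a price above $p_{g(j)}$. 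If you only intend to establish the corollary as the paper states and proves it, the first invocation of Theorem~\ref{theorem2} suffices and the rest is unnecessary.
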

\begin{proof}
It suffices to prove that no UAV can gain an improved profit by aborting the assigned UGV in the auction and re-matching with another UGV for battery recharging. According to the Theorem~\ref{theorem2}, in both cases, we can derive $q_j\bar{\Phi} _{g\left( j \right)}-p_{g\left( j \right)}-\left[ q_l\bar{\Phi} _{g\left( j \right)}-p_{g\left( l \right)} \right] \geq 0$ under given constraints. Hence, the assignment outcome produced by our proposed auction mechanism is stable. Corollary~\ref{corollary2} is proved.
\end{proof}

\begin{theorem}\label{theorem5}
The overall computational complexity of the proposed auction mechanism yields $\mathcal{O}\big(I(\tau)\log(I(\tau)) + J(\tau)\log(J(\tau)) + |W(\tau)|^2\big)$.
%Low-complexity%Polynomial
\end{theorem}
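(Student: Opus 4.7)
The plan is to walk through Algorithm~1 phase by phase, bound the work done in each, and verify that the dominant terms assemble into the claimed expression. Since the algorithm is a straight-line procedure (no loops with recursive depth), it suffices to upper bound each line and sum.

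First I would dispatch Phase~1 quickly: constructing $\mathcal{I}'$, $\mathcal{J}'$, the urgencies $\rho_i[t]$, the QoRS values $q_j[t]$, and the time-averaged valuations $\bar{\Phi}_i$ each require one linear pass, costing $\mathcal{O}(I(\tau)+J(\tau))$ in total. Initialization of $\boldsymbol{\beta}$, $\boldsymbol{p}$, $\mathcal{W}$, $\mathcal{L}$, $\mathcal{D}$ is also linear. Next I would attribute the two logarithmic terms to Phase~2: sorting UGVs in $\mathcal{J}'$ in descending order of their types costs $\mathcal{O}(J(\tau)\log J(\tau))$ with any comparison sort (e.g., merge sort), and sorting UAVs in $\mathcal{I}'$ by valuation (and by bid) yields $\mathcal{O}(I(\tau)\log I(\tau))$. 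Once both lists are sorted, the greedy allocation in line~9 is a single pass of length $K=\min\{I(\tau),J(\tau)\}=|W(\tau)|$, so $\mathcal{O}(|W(\tau)|)$, and the bookkeeping updates in line~10 add only $\mathcal{O}(|W(\tau)|)$ more.

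The quadratic term comes from Phase~3. The last-winner payment (\ref{eq4-6}) is $\mathcal{O}(1)$, but evaluating (\ref{eq4-7}) directly for a winner $g(j)$ sums $K-j$ terms of the form $(q_k-q_{k+1})b_{g(k+1)}$; summing over all $j\in\{1,\dots,K-1\}$ gives $\sum_{j=1}^{K-1}(K-j)=\tfrac{K(K-1)}{2}=\mathcal{O}(|W(\tau)|^2)$. Phase~4 is not a computational step of the mechanism itself but a physical execution of wireless charging and payment transfer, and contributes at most $\mathcal{O}(|W(\tau)|)$ in message-passing work. Adding the phase costs gives $\mathcal{O}(I(\tau)+J(\tau)) + \mathcal{O}(I(\tau)\log I(\tau)) + \mathcal{O}(J(\tau)\log J(\tau)) + \mathcal{O}(|W(\tau)|^2)$, and since the logarithmic terms absorb the linear ones, the claim follows.

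There is no real obstacle in the sense of a clever argument needed; the only subtle point is justifying that the $|W(\tau)|^2$ term cannot be avoided in the stated bound, even though the recursive identity $p_{g(j)}=(q_j-q_{j+1})b_{g(j+1)}+p_{g(j+1)}$ noted after (\ref{eq4-7}) would permit an $\mathcal{O}(|W(\tau)|)$ computation. I would therefore state clearly that the analysis assumes the direct evaluation of (\ref{eq4-7}) as written, which yields the advertised bound; noting the recursive shortcut as a remark would actually strengthen the result but is not needed for the theorem as stated.
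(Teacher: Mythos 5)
Your proposal is correct and follows essentially the same phase-by-phase accounting as the paper: linear type evaluation in Phase~1, the two $\mathcal{O}(I(\tau)\log I(\tau))$ and $\mathcal{O}(J(\tau)\log J(\tau))$ sorting costs plus an $\mathcal{O}(K)$ allocation pass in Phase~2, and the $\mathcal{O}(|W(\tau)|^2)$ payment computation in Phase~3 dominating alongside the sorts. Your added observation that the recursive identity $p_{g(j)}=(q_j-q_{j+1})b_{g(j+1)}+p_{g(j+1)}$ would reduce Phase~3 to linear time is a valid refinement not made in the paper, but it does not change the correctness of the stated bound.
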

\begin{proof}
In the phase 1 of the auction mechanism, the computational complexity for type evaluation of UAVs and UGVs is $\mathcal{O}(I(\tau) + J(\tau))$. In the next phase 2, the sorting process of UGVs' types and UAVs' valuations yields a complexity of $\mathcal{O}(I(\tau)\log(I(\tau)) + J(\tau)\log(J(\tau)))$, while the allocation process has a complexity of $\mathcal{O}(K)$, where $K = \min\{I(\tau),J(\tau)\}$. In the last phase 3, the payment determination process for winners has a complexity of $\mathcal{O}(|W(\tau)|^2)$, where $|W(\tau)|$ is the number of winners in an auction with time window $\tau$. Thereby, the overall computational complexity yields $\mathcal{O}(I(\tau)\log(I(\tau)) + J(\tau)\log(J(\tau)) + |W(\tau)|^2)$. Theorem~\ref{theorem5} is proved.
\end{proof}

\section{PERFORMANCE EVALUATION}\label{sec:SIMULATION}
In this section, we first introduce the simulation settings, then we discuss the numerical results.
%In this section, we first introduce the simulation settings (in Sect.~\ref{subsec:evalution1}), then we discuss the numerical results (in Sect.~\ref{subsec:evalution2}).
\subsection{Simulation Setup}\label{subsec:evalution1}
We consider a 3D simulation area of $5000\times5000\times10\, \mathrm{m}^3$, where the sensing spot locates at the center of the area and the sensing task area is a circle with radius 200\,m. UAVs are flying over the sensing task area with constant altitudes in $[5,10]$\,m. UGVs are located outside the sensing task area, and their distances to the sensing spot are uniformly distributed in $[0.3,2.5]$\,km. One base station with location $(3000,3000,50)$ offers wireless communication services for UAVs and UGVs at the considered area. The time window in the auction is set as $\tau=8$ seconds. UAV's battery capacity is set as $C_i = 97.58$\,Wh \cite{8758340}, and the alert battery SoC level is $s_{\min}=20\%$. The current battery SoC of UAVs follows the uniform distribution in $[30\%,100\%]$. The wireless energy efficiency of UGVs is set as $\eta_j = 80\%$. The velocity of UGVs varies between $20$\,km/h and $60$\,km/h. The maximum flying velocity of UAV is set as $v_{\max} = 10$\,m/s. UAV's flying power parameters and channel parameters are set according to \cite{7991310,8758340}. The normalized QoRS is adopted, and UGV's QoRS is computed according to its normalized distance to the sensing spot. The rendezvous for each matched pair of UAV and UGV is generated at the midpoint between the sensing spot and the corresponding UGV. The linear function is adopted for UAV's valuation modelling in recharging, i.e., $\Phi_i[t] = \mu_0 + \mu_1 \rho_i[t]$, where the parameters are set as $\mu_0=1$ and $\mu_1=5$. %The numerical simulations are carried out by using Matlab.

We compare the proposed scheme with the following two conventional schemes.
\begin{itemize}
  \item \emph{Exhaustive optimal scheme:} it exhaustively searches the optimal allocation outcomes for UAVs and UGVs in the OCSP problem $\mathbf{P}1$ without the envy-free constraint (\ref{eq:cons7}). %We utilize the CPLEX solver empowered by IBM ILOG CPLEX optimization to obtain the optimal solution.
  \item \emph{Static WPT scheme:} it replaces the UGVs with static WPT facilities in the simulation area for UAV charging services. The auction process between UAVs and static WPT facilities is similar to our proposed auction between UAVs and UGVs.
\end{itemize}

\subsection{Numerical Results}\label{subsec:evalution2}
In Figs.~\ref{fig:simu01}--\ref{fig:simu03}, we evaluate the satisfaction level of UAVs, utility of UAVs, and social surplus in the proposed scheme, compared with the conventional schemes. Then, we evaluate the effect of auction time window in Fig.~\ref{fig:simu03}. After that, in Tables~\ref{simutable1}--\ref{simutable2}, we evaluate the strategy proofness and envy freeness of the proposed scheme. The \emph{satisfaction level of UAVs} is defined as:
\begin{align}
SL \!=\! \sum_{i=1}^{I(\tau)}{\sum_{j=1}^{J(\tau)}{\beta_{i,j}q_j \rho_i(\tau)} },\,\rho_i(\tau) \!=\! \lfloor \frac{\tau}{t} \rfloor^{-1} \cdot \sum_{t\in \tau}{\rho}_i[t].
\end{align}
Besides, the \emph{non-envy ratio} is adopted to measure the envy freeness of the assignment outcomes, which is defined as the number of UAVs that does not envy the other UAV's assignment to the total number of participating UAVs in the auction.

\begin{figure}[!t]%\setlength{\abovecaptionskip}{0.1cm}
\centering
  \includegraphics[width=8.0cm,height=6.5cm]{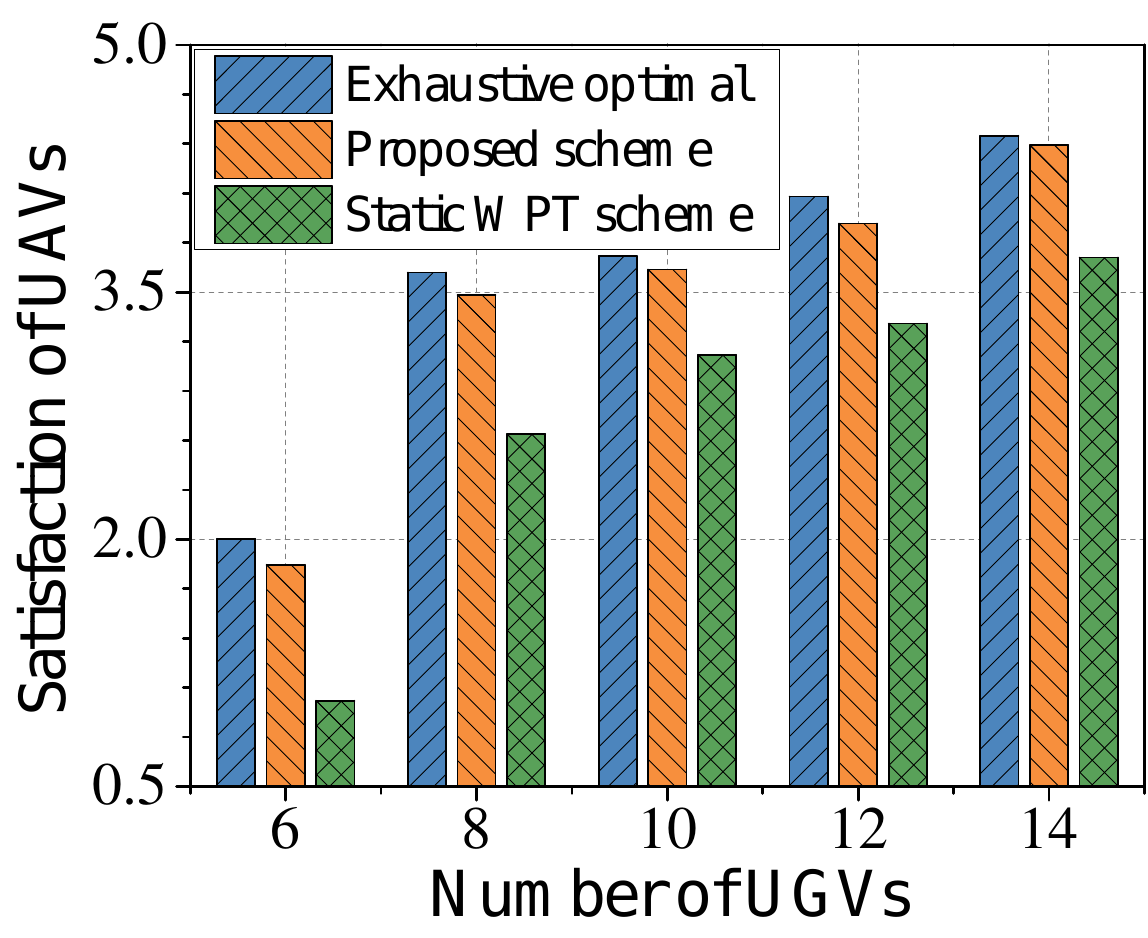}
  \caption{Satisfaction level of UAVs vs. number of UGVs (i.e., $J(\tau)$) in three schemes, where $I(\tau)=10$.} \label{fig:simu01}%\vspace{-2mm}
\end{figure}

\begin{figure}[!t]%\setlength{\abovecaptionskip}{0.1cm}
\centering
  \includegraphics[width=8.0cm,height=6.5cm]{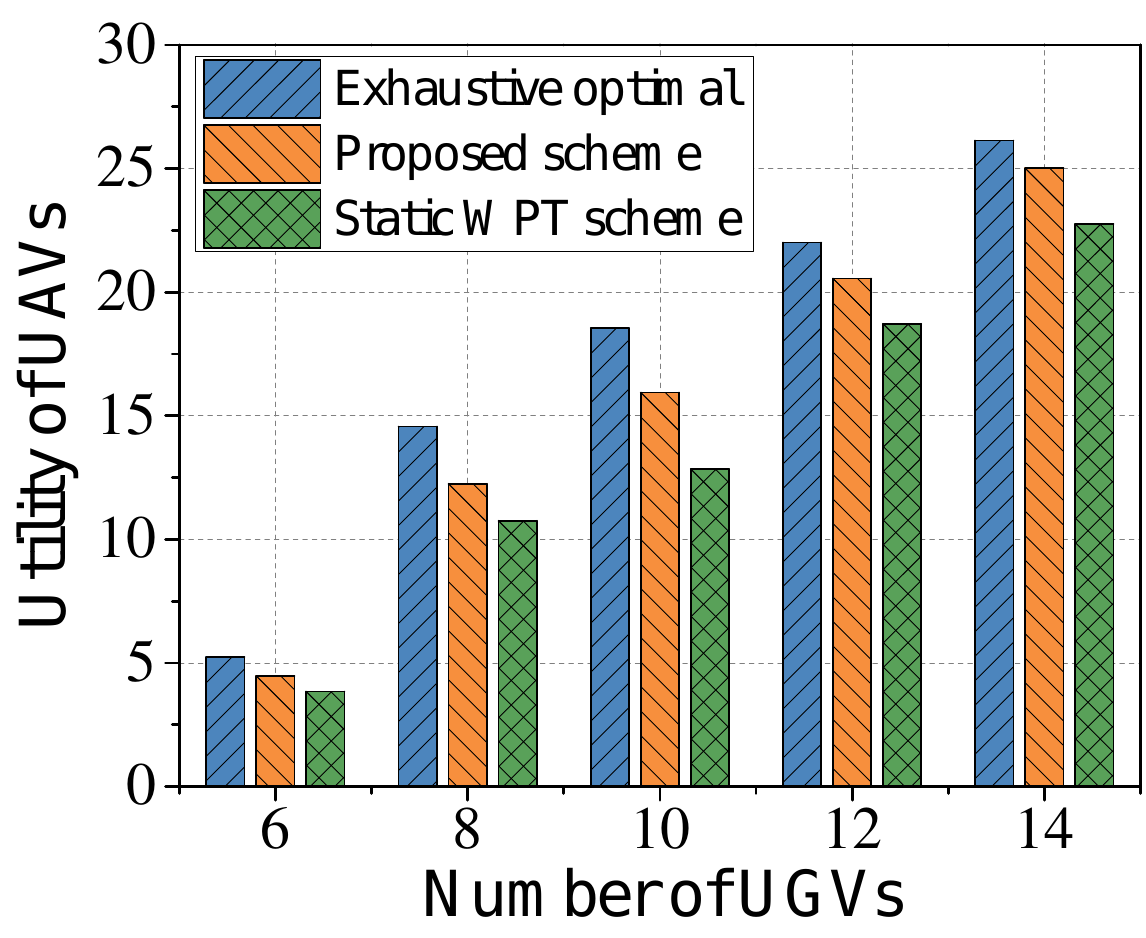}
  \caption{Utility of UAVs vs. number of UGVs (i.e., $J(\tau)$) in three schemes, where $I(\tau)=10$.} \label{fig:simu02}%\vspace{-2mm}
\end{figure}

Fig.~\ref{fig:simu01} and Fig.~\ref{fig:simu02} show the satisfaction level and utility of UAVs in three schemes, respectively, when the number of UGVs in the auction grows from $6$ to $14$. In these two simulations, the number of UAVs in the auction is set as $10$.
As seen in these two figures, the proposed scheme outperforms the static WPT scheme in attaining a smaller gap with the exhaustive optimal approach. It can be explained as follows.
Compared with the static WPT facilities, our proposed scheme utilizes mobile UGVs to dynamically generate the rendezvouses for UAV launching and battery recharging, thereby saving the flying cost for UAVs. Besides, in both figures, UAVs' satisfaction level and total utility are increasing when the number of UGVs is increasing.
The reason is when more UGVs participating in the auction, each UAV can have a higher chance to match a more preferred UGV for battery recharging. Thereby, UAVs can enjoy a higher satisfaction level of charging service and obtain higher utilities.

\begin{figure}[!t]%\setlength{\abovecaptionskip}{0.1cm}
\centering
  \includegraphics[width=8.0cm,height=6.5cm]{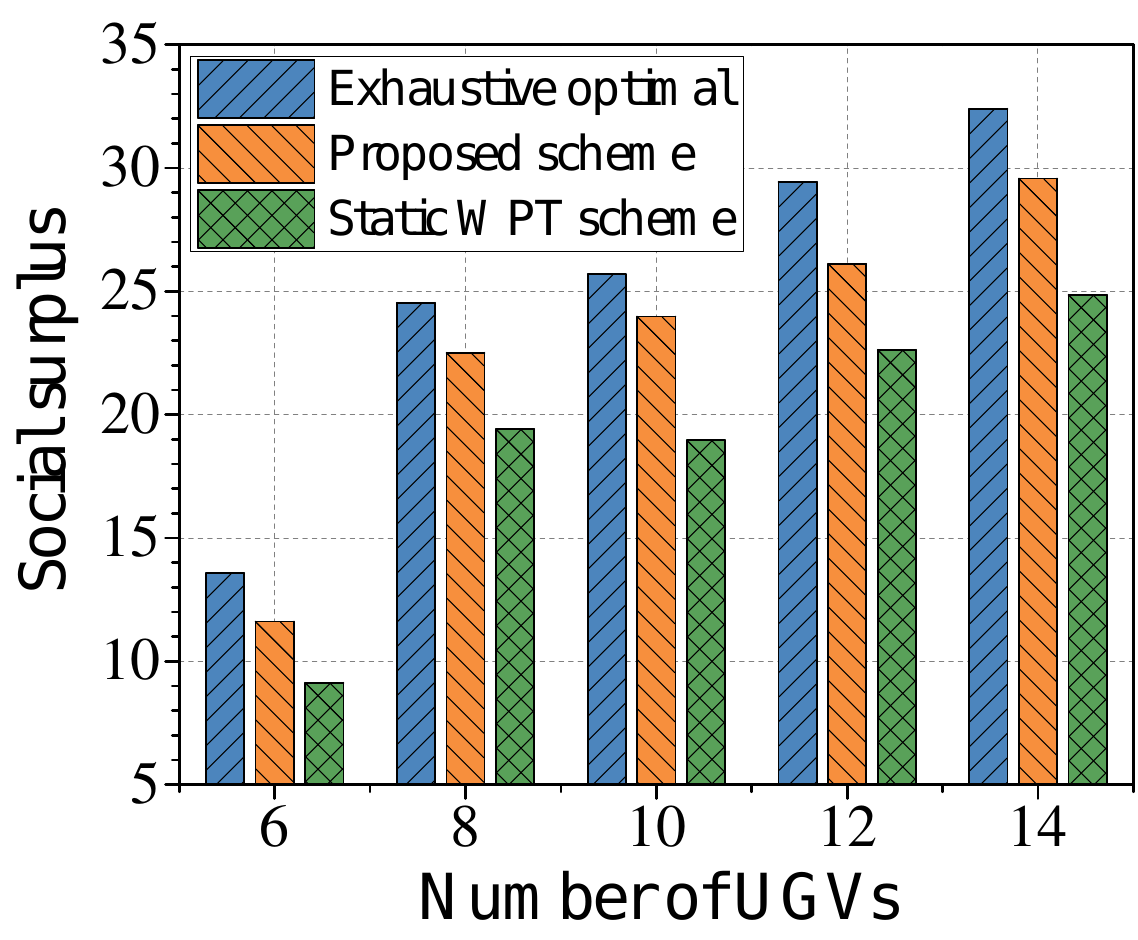}
  \caption{Social surplus vs. number of UGVs (i.e., $J(\tau)$) in three schemes, where $I(\tau)=10$.} \label{fig:simu03}%\vspace{-2mm}
\end{figure}

\begin{figure}[!t]%\setlength{\abovecaptionskip}{0.1cm}
\centering
  \includegraphics[width=8.0cm,height=6.5cm]{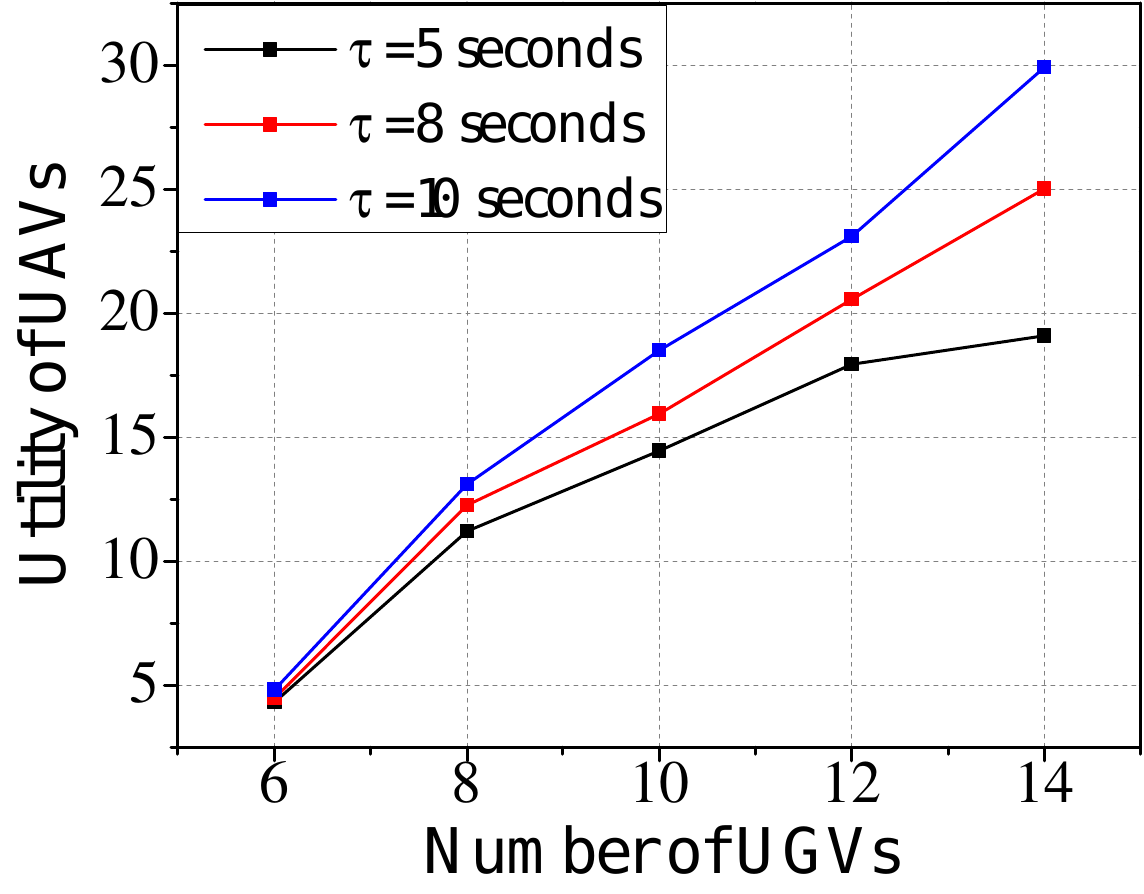}
  \caption{Utility of UAVs vs. auction time window $\tau$ under different numbers of UGVs (i.e., $J(\tau)$), where $I(\tau)=10$.} \label{fig:simu04}%\vspace{-2mm}
\end{figure}

Fig.~\ref{fig:simu03} depicts the social surplus in three schemes, where the number of UGVs in the auction varies between $6$ and $14$. Here, the number of UAVs in the auction is $10$. From Fig.~\ref{fig:simu03}, it can be observed that the proposed scheme attains a higher social surplus than the static WPT scheme. The reason is that in the static WPT scheme, the WPT facilities in the simulation area are static, causing a higher round-trip cost for UAVs than our UGV-assisted WPT scheme. Additionally, given more UGVs, as the chances for both UAVs and UGVs to match their more preferred parter can be higher, the overall utility of UAVs and UGVs (i.e., social surplus defined in Eq. (\ref{eq4-3})) is greater.

Fig.~\ref{fig:simu04} illustrates the utility of UAVs in the proposed scheme when both the auction time window $\tau$ and number of UGVs vary. As seen in Fig.~\ref{fig:simu04}, the longer auction time window can result in higher UAV utilities when the number of UGVs is fixed. The reason is that, in our proposed auction scheme, the bid collection process ends if the time window $\tau$ elapses. As such, given the longer auction time window, more bids of UAVs and UGVs can be included in the current auction to help them make better energy matching choices. Besides, the utility of UAVs grows with the number of UGVs, which has been analyzed in Fig.~\ref{fig:simu02}.

\begin{table}[!t]\setlength{\abovecaptionskip}{0.1cm}
    \centering
    \caption{Comparison of the utility of a randomly selected UAV under truthful bidding and untruthful bidding in different auction sizes.}\label{simutable1}
    \begin{tabular}{|c|c|c|}
        \hline
        {Auction size} & \textbf{\begin{tabular}[c]{@{}c@{}}Truthful\\bidding\end{tabular}} & \textbf{\begin{tabular}[c]{@{}c@{}}Untruthful\\bidding\end{tabular}} \\ \hline
        ($I(\tau)=5,J(\tau)=5$)      & \textbf{3.874}          & 2.0148            \\ \hline
        ($I(\tau)=20,J(\tau)=20$)      & \textbf{28.234}         & 22.041            \\ \hline
    \end{tabular}
\end{table}

\begin{table}[!t]\setlength{\abovecaptionskip}{0.1cm}
    \centering
    \caption{Comparison of the ratio of non-envy UAVs in three schemes under small-scale and large-scale auctions.}\label{simutable2}
    \begin{tabular}{|c|c|c|}
        \hline
       Auction size                                          & \textbf{\begin{tabular}[c]{@{}c@{}}Exhaustive \\ optimal\end{tabular}} & \multicolumn{1}{l|}{\textbf{Ours}} \\ \hline
        ($I(\tau)=5,J(\tau)=5$)   & 40\%                                             & \textbf{100\%}                                         \\ \hline
        ($I(\tau)=20,J(\tau)=20$) & 30\%                                             & \textbf{100\%}                                         \\ \hline
    \end{tabular}
%    \begin{tabular}{|l|c|c|}
%        \hline
%        \multicolumn{1}{|c|}{}      & ($I(\tau)=5,J(\tau)=5$) & ($I(\tau)=20,J(\tau)=20$) \\ \hline
%        \textbf{Exhaustive optimal} & 40\%                                                & 30\%                                                  \\ \hline
%        \textbf{Proposed scheme}    & 100\%                                               & 100\%                                                 \\ \hline
%    \end{tabular}
\end{table}

Table~\ref{simutable1} compares the UAV utility under truthful bidding and untruthful bidding in the proposed auction. As seen in Table~\ref{simutable1}, the utility of the randomly selected UAV when bidding truthfully is greater than that in untruthfully bidding under both small-scale and large-scale auctions. It indicates that bidding truthfully is the dominant strategy for UAVs, which validates the strategy proofness of our auction mechanism and conforms to Theorem~\ref{theorem1}.

Table~\ref{simutable2} compares the ratio of non-envy UAVs in two schemes under small-scale auction (i.e., $I(\tau)=5,J(\tau)=5$) and large-scale auction (i.e., $I(\tau)=20,J(\tau)=20$). As observed in Table~\ref{simutable2}, the proposed scheme outperforms the exhaustive optimal approach and enforces envy freeness for all UAVs under both small-scale and large-scale auctions, which also conforms to the theoretical results in Theorem~\ref{theorem2}.

\section{CONCLUSION}\label{sec:CONSLUSION}
UAV's limited flight endurance is one of the main impediments to modern UAV applications. By leveraging ground vehicles mounted with WPT facilities on their proofs, this paper has proposed a mobile and collaborative recharging scheme for UAVs to facilitate on-demand wireless battery recharging. An energy scheduling problem for multiple UAVs and multiple vehicles has been formulated under practical constraints and energy competitions in the highly dynamic network. We have also devised an online auction-based solution with low complexity to allocate and price idle wireless chargers on vehicular proofs in real time. Theoretical analyses have proved that the proposed scheme produces strategy-proof, envy-free, and stable allocation outcomes. Lastly, numerical results validate the effectiveness of the proposed scheme in delivering more satisfactory UAV charging services. For the future work, we plan to improve the auction efficiency and investigate the charging scheduling mechanism under more complex situations.

\bibliographystyle{IEEETran}
\bibliography{myref}

\end{document}